\begin{document}

\title{The Exponentially Faster Stick-Slip Dynamics of the Peeling of an Adhesive Tape}
\author{Nachiketa Mishra \thanks{Supercomputer Education and Research Centre, Indian Institute of Science, Bangalore 560012, India.
Email: mishra.nachiketa@gmail.com}\and Nigam Chandra Parida \thanks{Supercomputer Education and Research Centre, Indian Institute of Science, Bangalore 560012, India.
Email:ncparida@gmail.com} \and Soumyendu Raha 
         \thanks{Supercomputer Education and Research Centre, Indian Institute of Science, Bangalore 560012, India. 
            Email:raha@serc.iisc.in}}
\date{\today}
\begin{abstract}
The stick-slip dynamics is considered from the nonlinear differential-algebraic equation (DAE) point of view and the peeling dynamics is 
shown to be a switching differential index DAE model. In the stick-slip regime with bifurcations, the differential
index can be arbitrarily high. The time scale of the peeling velocity, the algebraic variable, in this regime is
shown to be exponentially faster compared to the angular velocity of the spool and/or the stretch rate of the tape. 
A homogenization scheme for the peeling velocity which is characterized by the bifurcations is discussed
and is illustrated with numerical examples computed with the {\it $\alpha$-method}.
\end{abstract}

\maketitle

\newtheorem{lem}{Lemma}
\newtheorem{thm}{Theorem}

\maketitle
\section{Introduction}
Experimental observations (see, for example, \cite{barquins1997kinetics, ciccotti13, cortet2007imaging} and references therein) via imaging and other
means have established the occurrence of bifurcations in the stick-slip dynamics of the peeling of a stretchable adhesive tape 
pulled with a constant velocity off a circular spool which is free to rotate around its center. The older theoretical studies
\cite{ciccotti1998stick, ciccotti2004complex} focused on the equations of motion and  the more recent ones 
\cite{de2006dynamics, kumar2008hidden, gandur1997complex, de2008lifting} recognized the differential-algebraic equation (DAE)
structure of the formulations, and proposed mathematical approaches such as the inertial regularization \cite{de2005missing}
to simulate the dynamics.
However, research issues such as (i) the differential index (see later in this work) of the DAE model, 
(ii) relationship of the differential index in the stick-slip regime to the bifurcation, and 
(iii) satisfaction of the adhesion-shear constitutive relationship (the algebraic constraint) in relation to the stick-slip regime of the dynamics
follow naturally from the DAE structure of the peeling dynamics. In this context it may be recalled that 
ordinary differential equation (ODE) analysis has its limitations in dealing with DAEs since DAEs are not ODEs \cite{petzold1982differential}.
\par
In this work, we address the above questions and show (i) that the peeling dynamics DAE is a variable differential index system, 
with differential index ranging from two to possibly very high values in the stick-slip regime compared to being one otherwise, 
(ii) that the higher differential index implies an exponentially faster time scale for the peeling velocity relative to the peel front angle between
the tape and the spool, and/or the tensile displacement of the tape, and
(iii) that the above two properties lead to the nonlinear bifurcations. Finally, (iv) we develop a homogenized ODE representation 
of the peeling dynamics DAE model along with a characteristic time scale for the homogenization. 

The homogenized ODE scheme is illustrated with physically consistent numerical simulations.

\section{The Peeling Model}
Following the modeling approach in \cite{de2004dynamics, ciccotti13}, the dynamics of the peeling of an adhesive tape off a circular spool is given as
\begin{subequations}
\begin{eqnarray}
&& L(\alpha):= \sqrt{R^2 + l^2 - 2R l \cos\alpha} \label{defL} \\
&& F(u, \alpha) := \frac{k u}{L(\alpha) - u}~~ \mbox{(tensile force in tape)} \label{defF} \\
&& \dot{\alpha} = \omega - \frac{v}{R}  \label{av} \\
&& \mbox{(rate of change of contact or peel front angle)} \nonumber \\
&& \dot{\omega} = - \frac{R l \sin\alpha}{I L(\alpha)}  F(u, \alpha) \label{tb} \\
&& \mbox{(torque balance for the spool)} \nonumber \\ 
&&\dot{u} = \frac{R l \sin \alpha}{L(\alpha)}\left(1 + \frac{u}{L(\alpha)}\right)\bigg( \omega - \frac{v}{R}\bigg) + V -v   \label{str} \\
&& \mbox{(rate of tensile displacement (stretch) of the tape)} \nonumber \\
&&\bigg(1+ \frac{l \sin\alpha}{L(\alpha)}\bigg) F(u,\alpha) - \phi (v,V)  = 0 \label{sac} \\
&& \mbox{(shear-adhesion constitutive relationship)} \nonumber 
\end{eqnarray}
\label{peel}
\end{subequations}
where $\alpha$, the contact or the peel front angle, is the angle between the radius passing through the point where the tape peels off the spool and
the radius coinciding with the horizontal, $u$ is the tensile displacement or stretch of the tape which has an Young's modulus 
$E$ and cross-sectional area $a$ so that $k:=Ea$, $v$ is the speed at which the tape is peeled at the point of the contact with the spool, or
simply the peeling velocity
and $V$ is the constant speed at which the free end of the tape is pulled along its longitudinal axis. The peeled
tape from the point of contact to the point where it is pulled, is assumed to have negligible mass compared to the circular spool which 
has only in-plane rotational degree of freedom about its center of mass. 
\begin{figure}[h,t,b]
 \centering
 \includegraphics[scale=0.3]{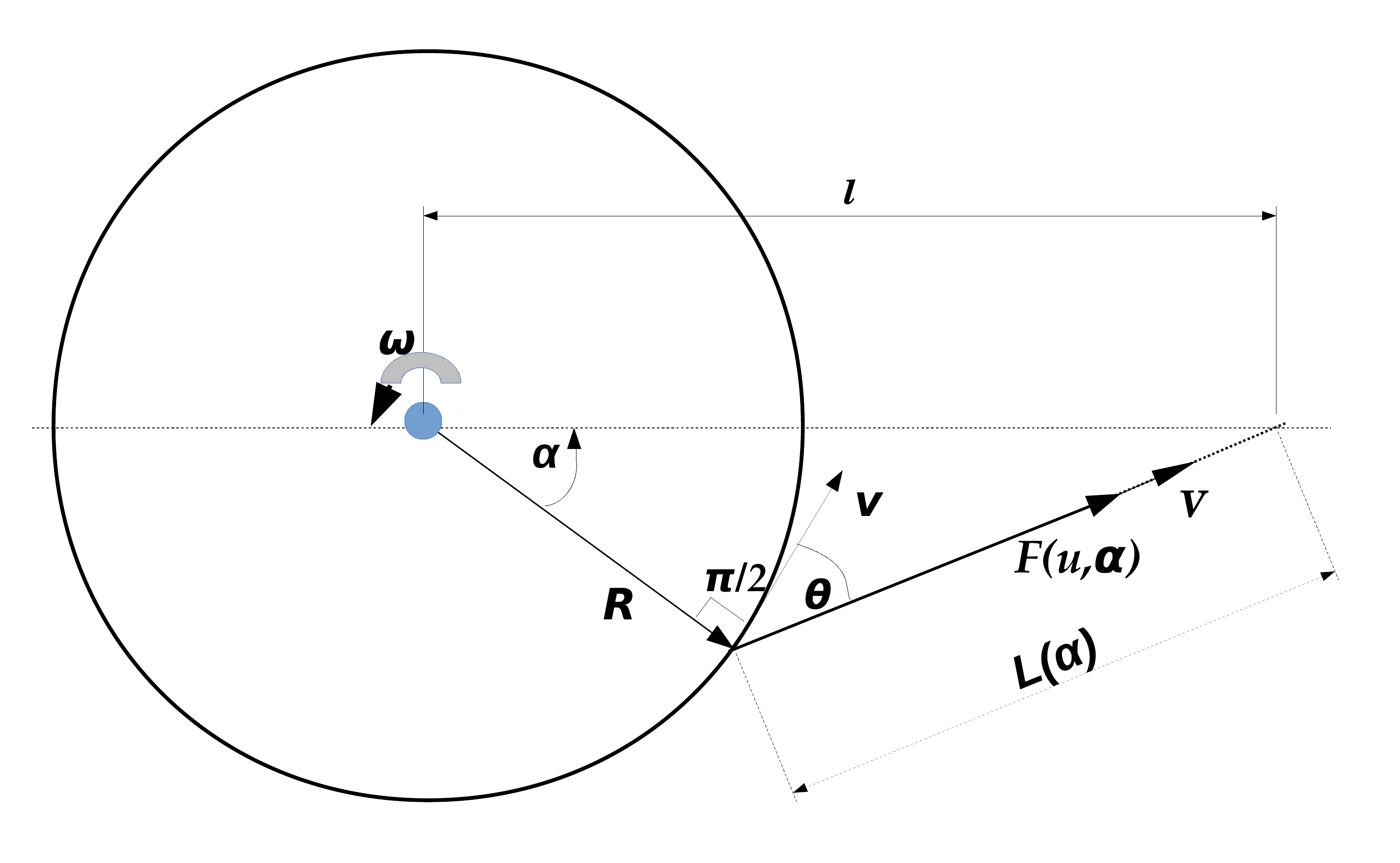}
\caption{The Planar Model of the Peeling of an Adhesive Tape off a Rotary Circular Spool.}\label{model}
\end{figure}
The radius of the spool is $R$; $l$ is the constant distance
between the center of mass of the spool and the point where the longitudinal axis of the peeled tape meets the horizontal.
The moment of inertia of the spool about its center of mass is $I$.  The angular velocity of the spool is denoted by $\omega$. 
The function $\phi(v,V)$ is the constitutive model for the adhesion depending on the constant pulling 
velocity $V$ and the peeling velocity $v$. Usually $\phi(v,V)$ is modeled
such that it increases as the tape sticks until it reaches a maximum and decreases as the tape peels off by slipping \cite{ciccotti13},\cite{de2004dynamics}. 
The parameters $E,a,l,R,I,V$ are constants with respect to time and the other variables. The time varying quantities are $\alpha,\omega,u$ and $v$.
The angular velocity $\omega$, $\dot{\omega}$,  the peel front or contact angle $\alpha$ and $\dot{\alpha}$ are positive
counter-clockwise. The model is shown in Figure \ref{model}.
\subsection{Derivation from the Lagrangian}
The model (\ref{peel}) can be formally obtained from the Lagrangian of the system. 
With reference to Figure \ref{model} and in line with the Lagrangian approach in \cite{de2004dynamics} 
the following work-energy relationships can be written in the generalized coordinate system $(\alpha, u, \int v dt)^{\mathsf T}$.
\begin{subequations}
\begin{eqnarray}
&&\mbox{Kinetic energy of the system:} \quad I \frac{(\dot{\alpha} + v/R)^2}{2} \\
&&\mbox{Potential energy input at the free end of the tape:} \nonumber \\ && \int F(u,\alpha) \dot{L}(\alpha) dt +
         \int F(u, \alpha) V dt,~\dot{L}(\alpha) = \frac{Rl \sin\alpha}{L(\alpha)} \dot{\alpha} \\ 
&&\mbox{Potential energy output at the peeling end of the tape:} \nonumber \\ &&\int F(u, \alpha) du + \int F(u, \alpha) v dt \\
&&\mbox{Energy dissipated in adhesion:} \quad \int \int \phi(v,V) dv dt  \\
&&\mbox{Lagrangian of the system:} \nonumber \\ && {\mathcal L}=I \frac{(\dot{\alpha} + v/R)^2}{2}  \nonumber \\ 
&& \quad -  \left(\int F(u,\alpha) \big(-du + \frac{Rl \sin\alpha}{L(\alpha)} d\alpha + (V-v)dt \big) - \int \int \phi(v,V) dv dt \right) ~~~~~
\end{eqnarray}
\label{lag}
\end{subequations}
from which the Euler-Lagrange equations
\begin{subequations}
\begin{eqnarray}
&& \frac{d}{dt} \frac{\partial {\mathcal L} } {\partial \dot{\alpha} } - \frac{\partial {\mathcal L} } {\partial \alpha } = 0 \\
&& \frac{d}{dt} \frac{\partial {\mathcal L} } {\partial \dot{u} } - \frac{\partial {\mathcal L} } {\partial u} = 0 \\
&& \frac{d}{dt} \frac{\partial {\mathcal L} } {\partial \dot{s} } - \frac{\partial {\mathcal L} } {\partial s} = 0,~ s := \int v dt
\end{eqnarray}
\label{lagderive}
\end{subequations}
yield 
\begin{subequations}
\begin{eqnarray}
&& I \frac{d}{dt} (\dot{\alpha} + v/R) + F(u, \alpha) \frac{Rl \sin\alpha}{L(\alpha)} + O((u/L(\alpha))^2) = 0  \nonumber \\
&& \quad \mbox{or} \quad I \dot{\omega} = -F(u, \alpha) \frac{Rl \sin\alpha}{L(\alpha)}, \label{lagtb} \\
&& \quad ~\mbox{(putting}~\omega=\dot{\alpha}+v/R,~\mbox{and ignoring higher order terms in}~u/L(\alpha).\mbox{)} \nonumber \\
&& F(u,\alpha) - \int \frac{\partial F}{\partial u} \big(V -v +\frac{Rl \sin\alpha}{L(\alpha)} \dot{\alpha}) dt = 0  \nonumber \\
&& \quad \mbox{or} \quad \dot{u} =V-v +  \frac{Rl \sin\alpha}{L(\alpha)} (1 + \frac{u}{L(\alpha)})\dot{\alpha},~\mbox{and}~ \label{lagstr} \\
&& \frac{d}{dt} I (\dot{\alpha} + v/R)/R + \phi(v,V)   -  F(u, \alpha) = 0\nonumber \\
&& \quad \mbox{or} \quad -\left(1 +  \frac{l \sin\alpha}{L(\alpha)}\right) F(u,\alpha) + \phi(v,V) = 0~~\mbox{(using (\ref{lagtb}))}~\label{lagsac}
\end{eqnarray}
\label{lageqn}
\end{subequations}
respectively. In (\ref{lag}), one can alternatively view the potential energy part of the Lagrangian 
as the elastic energy stored in the tape, $\int F(u,\alpha)(\dot{L}(\alpha) -\dot{u})dt$, to which the net work done on the system, 
$\int F(u,\alpha) (V-v) dt$, is added. 

It may be noted that when the approximations $L(\alpha) \approx l,~ u \ll L(\alpha)$ and that $\alpha$ is a small angle hold, 
(\ref{lageqn}) reduces to the same model as used in \cite{de2004dynamics}.  
The equations in (\ref{peel}) coincide with those in \cite{ciccotti13} when $\alpha$ is considered positive clockwise and $u \ll L(\alpha)$ holds. 
\subsection{Structure of the Peeling Model}
The equations of motion (\ref{av}--\ref{tb}), the material stretch rate equation (\ref{str}) and the shear-adhesion constitutive model of the peeling (\ref{sac})
constitute a DAE system in which $v$ is the algebraic variable. The variables $\omega, \alpha$ and $u$ are the differential variables.

A measure of difficulty of solving a DAE system is its 
differential index (for definition and detailed treatment see \cite{petzoldbook}). Often referred to simply as index, the differential index 
is effectively the number of times differentiations of the equations in the DAE system that should be done for obtaining a canonical first order ODE system in all the 
unknown variables of the DAE. 
Theorem 5.4.1 in \cite{petzoldbook} connects the difficulty of a numerical solution of a DAE to its differential index as the condition number of the 
Jacobian of the implicit integration method and shows it to increase exponentially with the differential index of the DAE system. 
We characterize the peeling model (\ref{peel}) 
with respect to its structure to show that the stick slip is essentially a problem of the DAE (\ref{peel}) switching to an arbitrarily large differential 
index from differential index 1 and that there is a resultant change in the time scale of evolution of the algebraic variable $v$, the peeling velocity, 
compared to the differential variables of (\ref{peel}).

We define the vectors $$
                       x := \begin{pmatrix} \alpha \cr \omega \cr  u \end{pmatrix} \in {\mathbb R}^3, ~~
f := \begin{pmatrix} \left( \omega - \frac{v}{R}\right) \cr  -\frac{R l \sin\alpha}{I L(\alpha)}  F(u, \alpha) \cr
\frac{R l \sin \alpha}{L(\alpha)}\left(1 + \frac{u}{L(\alpha)}\right)\left(\omega - \frac{v}{R}\right) + V -v \end{pmatrix}: {\mathbb R}^4 \to {\mathbb R}^3. $$
We introduce the subscript $t$ for indexing a time varying quantity, e.g., $\alpha_t := \alpha(t), ~\omega_t := \omega(t),~ u_t := u(t), 
~x_t :=x(t), ~v_t:=v(t), ~\phi_t :=\phi(t), ~F_t:=F(u_t,\alpha_t)$ and $f_t := f(x_t, v_t)$. 
The derivatives $$\frac{\partial f_t}{\partial x_t}:{\mathbb R}^4 \to {\mathbb R}^{3 \times 3},~~
\tilde{F}_t :=\frac{\partial \big((1 + \frac{l \sin\alpha}{L(\alpha)}) F(u, \alpha) \big)_t}{\partial x_t}:{\mathbb R}^2 \to {\mathbb R}^{1 \times 3}$$ are Jacobian matrices
 whereas $\frac{ \partial f_t}{\partial v_t}(\alpha_t,u_t):{\mathbb R}^2 \to {\mathbb R}^3$ is a gradient vector.
\par
The differential index of (\ref{peel}) is analyzed in the following.
\begin{lem}
 Let $J_t:= \big( -\frac{\partial \phi_t}{\partial v_t} + \tau \tilde{F}_t
 \big( {\mathcal I} - \tau \frac{\partial f_t}{\partial x_t} \big)^{-1} \frac{\partial f_t}{\partial v_t} \big),~\tau>0, \tau\to0$ exist and be non-zero
 at a time point $t$ with a $(x_t,v_t)$ satisfying (\ref{peel}).
 Then (\ref{peel}) has a unique local solution $({\mathsf x}(s),{\mathsf v}(s)),~s \in [t, t+\delta t],~ \delta t \ge 0$ as
 a function of time.\label{sol} \end{lem}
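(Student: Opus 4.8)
The plan is to recognize $J_t$ as the Schur complement (with respect to the algebraic block) of the Jacobian of the nonlinear system obtained from one step of an implicit integrator applied to (\ref{peel}), and then to invoke the implicit function theorem. Concretely, I would discretize (\ref{peel}) by a backward-Euler-type step of size $\tau$ about the given consistent point $(x_t,v_t)$: the differential part becomes $x_{t+\tau} - x_t - \tau f(x_{t+\tau},v_{t+\tau}) = 0$, and the algebraic part is enforced at the new point, $g(x_{t+\tau},v_{t+\tau}) := \bigl(1 + \frac{l\sin\alpha}{L(\alpha)}\bigr)F(u,\alpha)\big|_{t+\tau} - \phi(v_{t+\tau},V) = 0$. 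Writing $G(x,v)=0$ for this $4\times 4$ system, its Jacobian at $(x_t,v_t)$ (in the limit of the step, i.e. to leading order as $\tau\to 0$) is the block matrix
\begin{equation*}
DG = \begin{pmatrix} {\mathcal I} - \tau\,\dfrac{\partial f_t}{\partial x_t} & -\tau\,\dfrac{\partial f_t}{\partial v_t} \\[2ex] \tilde F_t & -\dfrac{\partial \phi_t}{\partial v_t} \end{pmatrix}.
\end{equation*}
Since ${\mathcal I} - \tau\frac{\partial f_t}{\partial x_t}$ is invertible for all sufficiently small $\tau>0$ (its eigenvalues are near $1$), the standard Schur-complement factorization gives $\det DG = \det\!\bigl({\mathcal I}-\tau\frac{\partial f_t}{\partial x_t}\bigr)\cdot J_t$, where $J_t$ is exactly the scalar in the statement. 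Hence the hypothesis "$J_t$ exists and is non-zero" is precisely the statement that $DG$ is nonsingular at $(x_t,v_t)$.

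Next I would apply the implicit function theorem to $G$. Because $(x_t,v_t)$ satisfies (\ref{peel}) and in particular the constraint $g=0$, it is a root of $G$; nonsingularity of $DG$ there yields, for $\tau$ small, a unique smooth map $\tau \mapsto (x_{t+\tau},v_{t+\tau})$ solving the one-step equations in a neighborhood. Letting the step $\tau\to 0$ and concatenating steps (equivalently, viewing this as the solvability condition for the DAE in semi-explicit form with a locally invertible constraint), one obtains a unique local trajectory $(\mathsf x(s),\mathsf v(s))$ for $s\in[t,t+\delta t]$. The cleaner route, which I would actually write, is to bypass the discretization after using it only as motivation: the condition $J_t\neq 0$ with ${\mathcal I}-\tau\frac{\partial f_t}{\partial x_t}$ invertible forces $\frac{\partial \phi_t}{\partial v_t} \neq \tilde F_t(\cdots)\frac{\partial f_t}{\partial v_t}$, and in the $\tau\to 0$ limit this degenerates to the condition that the constraint can locally be solved for $v$ in terms of $x$ — i.e., that $\frac{\partial}{\partial v}\bigl[(1+\frac{l\sin\alpha}{L(\alpha)})F - \phi(v,V)\bigr] = -\frac{\partial\phi_t}{\partial v_t} \neq 0$, so by the implicit function theorem $v = \psi(x)$ locally and substituting into $\dot x = f(x,v)$ gives an ODE with locally Lipschitz right-hand side, whose Picard–Lindelöf solution, paired with $\mathsf v(s)=\psi(\mathsf x(s))$, is the desired unique local solution.

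I expect the main obstacle to be making the passage $\tau\to 0$ rigorous in a way that is faithful to how $J_t$ is actually written (with the $\tau$'s retained), rather than silently replacing it by the cruder condition $\frac{\partial\phi_t}{\partial v_t}\neq 0$ that corresponds to index one. The subtlety is that in the high-index stick–slip regime the naive index-one reduction fails, and one wants the lemma's hypothesis to encode solvability uniformly across the regimes; so the careful step is to justify that $J_t\neq 0$ for $\tau>0$ small (equivalently, the nonsingularity of the perturbed block matrix $DG$ above) is the right non-degeneracy condition, and that it does imply existence of the one-step map with constants controlled as $\tau\to 0$, so the limiting trajectory exists. A secondary technical point is smoothness: one must check $f$, $\tilde F_t$, and $\phi$ are $C^1$ near $(x_t,v_t)$ — this holds away from the singularities $L(\alpha)=u$ and $\sin\alpha=0$, which should be stated as a standing assumption.
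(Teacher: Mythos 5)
Your core argument matches the paper's: you realize the DAE Jacobian as the $\tau$-parametrized block matrix $\mathcal{D}_t$ (you obtain it from one backward-Euler step; the paper invokes ``the fundamental theorem of integral calculus,'' which amounts to the same integral-form linearization), identify $J_t$ as its Schur complement, and invoke the implicit function theorem. Your remark that $\det\mathcal{D}_t = \det\bigl(\mathcal{I}-\tau\frac{\partial f_t}{\partial x_t}\bigr)\cdot J_t$ supplies the one step the paper asserts (``$\mathcal{D}_t$ is invertible since $J_t$ is non-zero'') without showing, which is a small but genuine improvement.

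One caution: do not actually write up the ``cleaner route'' you sketch. Sending $\tau\to 0$ and replacing the hypothesis by $\frac{\partial\phi_t}{\partial v_t}\neq 0$ strictly strengthens the assumption and would make the lemma vacuous precisely where the paper later needs it --- in Lemma~\ref{hi} and Theorem~\ref{bif} the point is to invoke this existence result at stick-slip configurations where $\frac{\partial\phi_t}{\partial v_t}=0$ yet $J_t\neq 0$ for small $\tau>0$. Collapsing to the index-one condition there would eliminate exactly the high-index regime the whole paper is about. You already flag this danger in your closing paragraph; the correct resolution is to keep the $\tau$-dependent $J_t$ as the nondegeneracy hypothesis and argue through $\mathcal{D}_t$, as in your main development and in the paper.
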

\begin{proof}
We observe that using the fundamental theorem of integral calculus, 
the Jacobian of the DAE (\ref{peel}) with respect to $x_t, v_t$ can be written as 
$\mathcal{D}_t := \begin{pmatrix}
                                                                                     {\mathcal I} - \tau \frac{\partial f_t}{\partial x_t} & -\tau \frac{\partial f_t}{\partial v_t} \cr
                                                                                   \tilde{F}_t & -\frac{\partial \phi_t}{\partial v_t}
                                                                                    \end{pmatrix},~\tau>0, \tau\to0 $ 
of which $J_t$ is the Schur complement, ${\mathcal I} \in {\mathbb R}^{3 \times 3}$ being the identity matrix. The Jacobian $\mathcal{D}_t$ 
is invertible since $J_t$ is non-zero. By the implicit function theorem, a unique solution $({\mathsf x}(s),{\mathsf v}(s))~s \in [t, t+\delta t],~ \delta t \ge 0$ 
giving $x,v$ as a function of time can be found over a neighborhood of $(t,x_t,v_t)$ if the Jacobian $\mathcal{D}_t$ is invertible.
\end{proof}
Here we remark that we primarily intend to study the local behavior of the stick-slip dynamics and the existence of a local solution suffices for the purpose.
Subject to certain conditions, one can use the Gronwal or Bihari Inequality, the Leray-Schauder Principle and Schauder Fixed Point theorem (similar to
the Peano existence for an ODE; cf. Chapter 3 of \cite{teschl})  to obtain a proof of
Peano existence and Osgood uniqueness of the global solution of (\ref{peel}) for a given consistent initial condition. However a treatment of the 
global solution of (\ref{peel}) is outside the scope of the present work.
\begin{lem}
The local differential index of (\ref{peel}) is $\lceil \log_{\tau} |J_t| \rceil + 1$ where $J_t, \tau$ are as defined in Lemma \ref{sol}.
 \label{pi}
\end{lem}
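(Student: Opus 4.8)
\section*{Proof sketch of Lemma~\ref{pi}}

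The plan is to read off the local differential index of~(\ref{peel}) from the way the iteration matrix $\mathcal{D}_t$ of Lemma~\ref{sol} degenerates as the step $\tau\to0$, using the characterization behind Theorem~5.4.1 of \cite{petzoldbook}: $\mathcal{D}_t$ is (up to the $O(\tau)$ terms dropped there) the Jacobian of the one-step implicit integrator of step $\tau$ applied to~(\ref{peel}), and for a locally index-$\nu$ DAE the condition number of that Jacobian grows like $\tau^{1-\nu}$ as $\tau\to0$ --- bounded for $\nu=1$, like $\tau^{-1}$ for a semi-explicit $\nu=2$, and so on. It therefore suffices (i) to show that the ill-conditioning of $\mathcal{D}_t$ is carried entirely by its Schur complement $J_t$, (ii) to determine the order at which $J_t\to0$ in powers of $\tau$, and (iii) to invert the relation.

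For (i) I would use the block $LDU$ factorization of $\mathcal{D}_t$ about the block-diagonal $\big(\mathcal{I}-\tau\frac{\partial f_t}{\partial x_t},\,J_t\big)$: this is legitimate because $\mathcal{I}-\tau\frac{\partial f_t}{\partial x_t}$ is invertible as $\tau\to0$ (already used in Lemma~\ref{sol}) and the unit-triangular factors are uniformly well-conditioned, their off-diagonal blocks $\tilde{F}_t(\mathcal{I}-\tau\frac{\partial f_t}{\partial x_t})^{-1}=O(1)$ and $-\tau(\mathcal{I}-\tau\frac{\partial f_t}{\partial x_t})^{-1}\frac{\partial f_t}{\partial v_t}=O(\tau)$ staying bounded. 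Hence $\det\mathcal{D}_t=\det(\mathcal{I}-\tau\frac{\partial f_t}{\partial x_t})\,J_t=(1+O(\tau))\,J_t$, and while $\|\mathcal{D}_t\|=O(1)$ the norm $\|\mathcal{D}_t^{-1}\|$ grows exactly through the $(2,2)$ Schur block $J_t^{-1}$ of $\mathcal{D}_t^{-1}$, so $\mathrm{cond}(\mathcal{D}_t)=\Theta(|J_t|^{-1})$. For (ii) I would expand $J_t$ through the Neumann series $(\mathcal{I}-\tau\frac{\partial f_t}{\partial x_t})^{-1}=\sum_{j\ge0}\tau^{j}\big(\frac{\partial f_t}{\partial x_t}\big)^{j}$, giving the convergent expansion $J_t=-\frac{\partial\phi_t}{\partial v_t}+\sum_{j\ge1}\tau^{j}\,\tilde{F}_t\big(\frac{\partial f_t}{\partial x_t}\big)^{j-1}\frac{\partial f_t}{\partial v_t}$; letting $\tau^{k}$ be its lowest-order non-vanishing term, $|J_t|=\Theta(\tau^{k})$ and therefore $\log_\tau|J_t|\to k$ as $\tau\to0$.

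The remaining step (iii), identifying $k=\nu-1$, is where the work lies. The short route equates $\mathrm{cond}(\mathcal{D}_t)=\Theta(|J_t|^{-1})=\Theta(\tau^{-k})$ with the $\Theta(\tau^{1-\nu})$ of Theorem~5.4.1, forcing $k=\nu-1$ at once. The more explicit route interprets the Neumann terms as the successive differentiations of the algebraic constraint~(\ref{sac}): its $v$-gradient is the $j{=}0$ term $-\frac{\partial\phi_t}{\partial v_t}$ (the index is $1$ iff this is nonzero); if it vanishes, one differentiation removes the $\dot v$-term and leaves $\tilde{F}_t\dot{x}=\tilde{F}_t f_t=0$ (using~(\ref{av})--(\ref{str})), whose $v$-gradient is the $j{=}1$ term; and inductively the $j$-th differentiation regenerates the $j$-th term, so that, the differential index being one more than the number of such differentiations needed to make $v$ locally solvable, $\nu=k+1$. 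I expect this explicit route to be the main obstacle: after the first differentiation $\frac{\partial f_t}{\partial x_t}$ itself depends on $v$ (through the $\omega-v/R$ occurring in~(\ref{tb})--(\ref{str})), so the exact $v$-gradient of the $j$-th differentiated constraint carries correction terms beyond the bare Neumann coefficient, and one must verify that these vanish together with the earlier coefficients, or at least do not disturb the leading $\tau$-power. Routing the argument through the condition-number estimate of step (i)--(ii) sidesteps this bookkeeping, since it uses only the leading $\tau$-power of $\det\mathcal{D}_t$, which the Schur factorization pins to that of $J_t$. Combining the three steps, the local differential index of~(\ref{peel}) equals $k+1=\lceil\log_\tau|J_t|\rceil+1$.
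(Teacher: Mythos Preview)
Your proposal is correct and follows essentially the same approach as the paper's proof: both localize the ill-conditioning of $\mathcal{D}_t$ to the Schur complement $J_t$, expand $J_t$ by the Neumann series to read off its leading $\tau$-power, and then translate that power into the differential index via the condition-number growth (the paper phrases this last step as ``scaling by $O(|D|^{-\nu})$ well-conditions $\mathcal{D}_t$'' rather than citing Theorem~5.4.1 explicitly, but the content is the same). Your write-up is somewhat more careful about the $LDU$ factorization and adds the optional explicit-differentiation route, but the core argument matches.
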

\begin{proof}
Consider the Jacobian of (\ref{peel}) with $D$ as the $d/dt$ operator. Then, by the implicit function theorem, we can rewrite $\mathcal{D}_t$ as $\begin{pmatrix}
                                                                                     {\mathcal I} - \frac{\partial (D^{-1} f)}{\partial x_t} & -\frac{\partial (D^{-1} f)}{\partial v_t} \cr
                                                                                    \tilde{F}_t & -\frac{\partial \phi_t}{\partial v_t}
                                                                                    \end{pmatrix} $ which must be invertible at a $(x_t,v_t)$ 
                                                                                    satisfying (\ref{peel}) for a unique local solution of (\ref{peel}) to exist.
Being the integration operator, $|D^{-1}|$ is $O(\tau)$,~$\tau>0, \tau\to0 $.  
It is obvious that the 2-norm of the local Jacobian $\mathcal{D}_t$ at any time point $t$ is $O(1)$.
The 2-norm of the inverse of $\mathcal{D}_t$ is the same order as that of absolute value of $J^{-1}_t$ which is the inverse of the Schur complement. 
Using the Neumann series yields \begin{eqnarray*} J_t &=& - \frac{\partial \phi_t}{\partial v_t} +  
\tilde{F}_t \frac{\partial (D^{-1} f)}{\partial v_t}
+\tilde{F}_t (\frac{\partial (D^{-1} f)}{\partial x_t}) \frac{\partial (D^{-1} f)}{\partial v_t} \\
&& + \tilde{F}_t (\frac{\partial (D^{-1} f)}{\partial x_t})^2 \frac{\partial (D^{-1} f)}{\partial v_t}
+ \cdots  \\  &=& -\frac{\partial \phi_t}{\partial v_t} +  \tau \tilde{F}_t {\partial x_t}\frac{\partial f_t}{\partial v_t} 
+ \tau^2 \tilde{F}_t \frac{\partial f_t}{\partial x_t}\frac{\partial f_t}{\partial v_t} +  \tau^3 \tilde{F}_t \big(\frac{\partial f_t}{\partial x_t}\big)^2\frac{\partial f_t}{\partial v_t}
+ \cdots.  \end{eqnarray*} The order of $|J_t|$ depends on the 2-norms of the coefficients of $\tau$ in the Neumann expansion and let this be 
$O(\tau^\nu)$ where $\nu$ is a natural number. Since $\tau>0, \tau\to0 $, $| J_t^{-1}|$ is $O(\tau^{-\nu})$ or  $O(|D|^{\nu})$. Hence the 2-norm condition
number of $\mathcal{D}_t$ is $O(|D|^{\nu})$. Thus scaling the right hand side of (\ref{peel}) with 
an operator of $O(|D|^{-\nu})$ will well-condition $\mathcal{D}_t$, which by Lemma \ref{sol} is sufficient for obtaining a unique local solution
of (\ref{peel}). This indicates $\nu $ differentiations of the equations in (\ref{peel}). Hence  $\nu + 1$, i.e., $\lceil \log_{\tau} |J_t| \rceil + 1$ differentiations
are needed to obtain a canonical ODE for $v_t$ and $\nu + 1$ is thus the differential index of (\ref{peel}). 
\end{proof}
The slip of the peeled tape occurs when the sticking resistance that has reached a maximum is overcome. The maximum adhesive force
is reached when $\frac{\partial \phi_t}{\partial v_t} = 0$ occurs in the constitutive relationship (\ref{sac}). In the following Lemma,
 we show that in the neighborhood of the maximum sticking resistance or adhesion and the subsequent slip on the yielding of the adhesive, 
(\ref{peel}) tends to have to an arbitrarily high order of singularity as characterized by its high differential index. 
This is in contrast to the regime when the adhesion is not in the neighborhood of a maximum, or,  
 $|\frac{\partial \phi_t}{\partial v_t}| \approx O(\tau^0)$ or greater implies that (\ref{peel}) has differential index one, 
since $v_t$ can be determined uniquely as a function of $(x_t,V)$ from (\ref{sac}) by the implicit function theorem.
\begin{lem} In the peeling model (\ref{peel}), let $\phi$ be a function which has at least a maximum over the range of
 values that the solution $v_t$ of (\ref{peel}) takes and $\frac{\partial \phi_t}{\partial v_t}$ monotonically and smoothly goes to zero and 
has a negative value for values of peeling velocity which are greater than $v_t$ at which $\frac{\partial \phi_t}{\partial v_t}=0$.
Then, along its solution in the neighborhood of $(t,x_t,v_t)$, (\ref{peel}) has differential index two if $\phi$ is a maximum at $v_t$. 
Further, in a neighborhood of $v_t$ at which $\phi$ is a maximum, (\ref{peel}) can have an arbitrarily large differential index.
 \label{hi}
\end{lem}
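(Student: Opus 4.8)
The plan is to read both assertions off Lemma~\ref{pi} by tracking the order in $\tau$ of the Schur complement $J_t$ through its Neumann expansion
\[
J_t = -\frac{\partial\phi_t}{\partial v_t} + \tau\,\tilde F_t\frac{\partial f_t}{\partial v_t} + \tau^2\,\tilde F_t\frac{\partial f_t}{\partial x_t}\frac{\partial f_t}{\partial v_t} + \cdots,\qquad \tau\to0^{+},
\]
so that $|J_t|=O(\tau^{\nu})$ yields local differential index $\nu+1$. For the first assertion, at a point where $\phi$ is a maximum in $v$ one has $\partial\phi_t/\partial v_t=0$ identically, so the leading term of $J_t$ is $\tau\,\tilde F_t\,\partial f_t/\partial v_t$. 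I would make this explicit: $\partial f_t/\partial v_t=\big(-\tfrac1R,\ 0,\ -1-\tfrac{l\sin\alpha}{L(\alpha)}(1+\tfrac{u}{L(\alpha)})\big)^{\mathsf T}$, whose last entry is strictly negative on the physical range $\alpha\in(0,\pi)$, while $\tilde F_t=\partial_x\big[(1+\tfrac{l\sin\alpha}{L})F\big]$ has a zero $\omega$-slot and a strictly positive $u$-slot $(1+\tfrac{l\sin\alpha}{L})\partial_uF=(1+\tfrac{l\sin\alpha}{L})\tfrac{kL}{(L-u)^2}$. Hence $\tilde F_t\,\partial f_t/\partial v_t$ is nonzero off a codimension-one locus, and in any case the nonvanishing of $J_t$ presupposed in Lemma~\ref{sol} forces $\tilde F_t\,\partial f_t/\partial v_t\neq0$ here; therefore $|J_t|$ is exactly of order $\tau^{1}$, i.e.\ $\nu=1$, and the local differential index is $2$. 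As a cross-check I would also run the classical index reduction: differentiating (\ref{sac}) once and substituting (\ref{av}),(\ref{str}) gives, when $\partial\phi_t/\partial v_t=0$, a relation linear in $v$ with coefficient $-\tfrac1R(\partial_\alpha h+g\,\partial_u h)-\partial_u h$, where $h:=(1+\tfrac{l\sin\alpha}{L})F$ and $g:=\tfrac{Rl\sin\alpha}{L}(1+\tfrac uL)$; one differentiation suffices to isolate $v$, confirming index two.

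For the second assertion I would argue that the effective order $\nu$ of $|J_t|$ is not bounded a priori in any neighborhood of the maximizer $v^{\ast}$. By the hypothesis on $\phi$ (smoothness and monotone decay of $\partial\phi_t/\partial v_t$ to zero at the maximum, with negative values just beyond), $|\partial\phi_t/\partial v_t|$ can be made smaller than any prescribed $\tau^{N}$ by taking $v_t$ close enough to $v^{\ast}$ along a slip trajectory; equivalently, the classical counterpart is the successive vanishing of the coefficient of $v$ under repeated differentiation of (\ref{sac}), each vanishing producing a new $v$-free hidden constraint that must be differentiated again, so that $N$ differentiations are required to isolate $v$. In such a neighborhood $|J_t|=O(\tau^{N})$ and Lemma~\ref{pi} gives local differential index $\lceil\log_{\tau}|J_t|\rceil+1=N+1$; since $N$ is arbitrary the index is unbounded there. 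I would close by tying this to the stated time-scale collapse: the reconstructed $\dot v$ is of the form (state-dependent)$/\,\partial\phi_t/\partial v_t$, so the peeling velocity varies on a scale faster by a factor $\sim|\partial\phi_t/\partial v_t|^{-1}$ — exponentially faster in the index — than $\alpha,\omega,u$.

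The main obstacle is exactly the content of the second assertion: establishing that $\nu$ genuinely has no finite bound rather than being capped by the fixed $3\times3$ structure of $\partial f_t/\partial x_t$ (by Cayley--Hamilton only the three Krylov directions $\{\partial f_t/\partial v_t,\ (\partial f_t/\partial x_t)\partial f_t/\partial v_t,\ (\partial f_t/\partial x_t)^{2}\partial f_t/\partial v_t\}$ are independent, so only finitely many Neumann coefficients are independent). I expect the resolution to run through the coupling with the bifurcation: $J_t$ must be read along the solution of (\ref{peel}), and as the constitutive branch is traversed near the maximum the quantity whose $\log_{\tau}$ controls the index, $|\partial\phi_t/\partial v_t|$, is not bounded below, so the conditioning of $\mathcal D_t$ — which Lemma~\ref{pi} identifies with the index — degrades without bound. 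Making that identification quantitative, and reconciling the $\tau\to0^{+}$ limit of Lemma~\ref{pi} with the independent limit $v_t\to v^{\ast}$ so that $|\partial\phi_t/\partial v_t|$ is legitimately assigned order $\tau^{N}$, is the step I would treat with the most care.
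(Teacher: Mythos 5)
Your index-two argument matches the paper's: both expand the Neumann series for $J_t$, observe that $\partial\phi_t/\partial v_t=0$ at the maximum kills the $O(1)$ term, and then show the $O(\tau)$ coefficient $\tilde F_t\,\partial f_t/\partial v_t$ is nonzero (the paper by writing out the full scalar expression, you by inspecting the entry-wise signs of $\tilde F_t$ and $\partial f_t/\partial v_t$). The classical one-differentiation cross-check is a nice addition. That part is fine.

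The second assertion is where you have a genuine gap, and it is not the Cayley--Hamilton issue you flag (that is interesting but tangential). The problem is the inference ``$|\partial\phi_t/\partial v_t|<\tau^{N}$ implies $|J_t|=O(\tau^{N})$.'' This is false on its face: $J_t=-\partial\phi_t/\partial v_t+\tau\,\tilde F_t\,\partial f_t/\partial v_t+O(\tau^{2})$, so when the first term is driven to zero the second term, which is $O(\tau^{1})$ and generically nonzero by your own first-part computation, dominates and $|J_t|$ stabilizes at $O(\tau)$, i.e.\ index two, no higher. Smallness of $\partial\phi_t/\partial v_t$ alone cannot push $\nu$ past $1$. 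The paper's mechanism is not smallness but \emph{cancellation}: just past the maximizer the hypothesis makes $\partial\phi_s/\partial v_s$ a small \emph{negative} number, and since the Neumann partial sum $\sum_{p=1}^{m}\tau^{p}\tilde F_s\big(\partial f_s/\partial x_s\big)^{p-1}\partial f_s/\partial v_s$ is itself a small quantity of a definite sign, by continuity and the intermediate value theorem there is a $v_s$ near $v^{\ast}$ (hence a time $s$ near $t$) at which $\partial\phi_s/\partial v_s$ equals that partial sum, so the first $m$ terms of $J_s$ cancel exactly and the remainder is $O(\tau^{m+1})$, giving index $m+2$. Since $m$ is free, the index is unbounded in the neighborhood. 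Your ``not bounded below'' heuristic for the conditioning is pointing at the right phenomenon, but without the cancellation step it does not deliver the order $\tau^{N}$ bound that Lemma~\ref{pi} needs, and your written argument would be rejected as it stands. (Your Cayley--Hamilton worry, incidentally, does not obstruct this: the paper only needs the scalar $\partial\phi_s/\partial v_s$ to match a scalar partial sum, not for the Krylov vectors to be independent.)
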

\begin{proof}
 When $\frac{\partial \phi_t}{\partial v_t}= 0$ at the maximum, the peeling velocity solution $v(s),~s \in (t,t+\delta t],~\delta t > 0$ 
cannot be determined from (\ref{sac}), i.e., one differentiation is not sufficient to
 obtain a notional first order ODE in $v$. Hence the differential index of (\ref{peel}) is higher than one if $\frac{\partial \phi_t}{\partial v_t}= 0$  . \par
By the implicit function theorem and from (\ref{av}--\ref{str}), we obtain 
\begin{eqnarray} \frac{\partial x_t}{\partial v_t} = \tau 
 \big( {\mathcal I} - \tau \frac{\partial f_t}{\partial x_t} \big)^{-1} \frac{\partial f_t}{\partial v_t} \label{xinv}
 \end{eqnarray} where $\tau \to 0, ~\tau > 0$. 
 Then, by Lemma \ref{sol}, $v$ can be determined explicitly
 in terms of $x$ and $t$ in the neighborhood of $(t,x_t,v_t)$ if 
 $J_t =\big( -\frac{\partial \phi_t}{\partial v_t} 
 + \frac{\partial \big( (1 +\frac{l \sin\alpha}{L(\alpha)}) F(u,\alpha) \big)_t}{\partial x_t}\frac{\partial x_t}{\partial v_t} \big)$ 
 is non-zero. 
If $\frac{\partial \phi_t}{\partial v_t} = 0$ at $v_t$,  and a $1 \gg \tau > 0$ is chosen, then 
\begin{eqnarray*}
J_t\bigg|_{\frac{\partial \phi_t}{\partial v_t} = 0} &=& \big( -\frac{\partial \phi_t}{\partial v_t} + \tau \tilde{F}_t
 \big( {\mathcal I} - \tau \frac{\partial f_t}{\partial x_t} \big)^{-1} \frac{\partial f_t}{\partial v_t} \big) = \tau \tilde{F}_t
 \big( {\mathcal I} - \tau \frac{\partial f_t}{\partial x_t} \big)^{-1} \frac{\partial f_t}{\partial v_t} \\
&=& \tau  \tilde{F}_t \frac{\partial f_t}{\partial v_t} + \tau^2 \tilde{F}_t \frac{\partial f_t}{\partial x_t}  \frac{\partial f_t}{\partial v_t} 
+ \tau^3 \tilde{F}_t \big(\frac{\partial f_t}{\partial x_t}\big)^2 \frac{\partial f_t}{\partial v_t} +  \cdots \\
&=& \tau \bigg( \frac{k l u \sin (\alpha ) \left(l^2+R^2\right)}{L(\alpha )^3 (u-L(\alpha ))^2}+\frac{k l u \cos (\alpha ) \left(l^2+R^2\right)}{R L(\alpha )^3 (u-L(\alpha ))}-\frac{k l^2 R u \sin (2 \alpha )}{L(\alpha )^3 (u-L(\alpha
   ))^2} \\ && -\frac{k l^2 u^2 (\cos (2 \alpha )+3)}{2 L(\alpha )^3 (u-L(\alpha ))^2}+\frac{2 k l^2 u}{L(\alpha )^2 (u-L(\alpha ))^2} \\ 
   &&-\frac{k (l \sin (\alpha )+L(\alpha )) \left(l L(\alpha ) \sin (\alpha )+l u \sin (\alpha
   )+L(\alpha )^2\right)}{L(\alpha )^2 (u-L(\alpha ))^2} \bigg) + O(\tau^2)
         \end{eqnarray*}
which is $O(\frac{k}{l}\tau)$ and hence, by Lemma \ref{pi}, (\ref{peel}) has differential index two.
\par
Let $\delta t$ and $\delta v_t$ be small positive numbers.
If $\phi(v,V)$is a maximum at $v=v_t$, then in a neighborhood $v_t \pm \delta v_t$, $\big| \frac{\partial \phi_t}{\partial v_t} \big| \le \epsilon,~0<\epsilon \ll 1$ holds
since $ \frac{\partial \phi_t}{\partial v_t}$ decreases monotonically from positive value to zero as $v$ decreases to $v_t$. 
For the values of $v \in (v_t, v_t+\delta v_t)$,  $\frac{\partial \phi_t}{\partial v_t} $ is 
a small negative number by the property of the function $\phi(v,V)$.  Then, $\phi(v_s,V), ~v_s \in \{[v_t-\delta v_t, v_t), (v_t, v_t+\delta v_t)\}$ 
at some time point $s \in  \{[t-\delta t, t), (t, t+\delta t)\}$ is such that
$$\left|- \frac{\partial \phi_s}{\partial v_s} + 
\sum_{p=1}^m \tilde{F}_s  \big(\frac{\partial f_s}{\partial x_s}\big)^{p-1}\frac{\partial f_s}{\partial v_s} \tau^p \right| \approx O(\tau^{m+1})~ \mbox{or smaller} ,~m \ge 1,$$
and $|J_t| \approx O(\tau^{m+1})$. The differential index of (\ref{peel}) is consequently $m+2$ by Lemma \ref{pi}. 
Since $\frac{\partial \phi_s}{\partial v_s}$ goes to zero smoothly and monotonically and
since $\tau \to 0, ~\tau > 0$, $\frac{\partial \phi_s}{\partial v_s}$ can cancel the first $m$ arbitrary large number of terms of the Neumann series expansion.
\end{proof}
The local Jacobian $\mathcal{D}_t$ becomes rank deficient as the differential index tends to
be arbitrarily high and $J_t \to 0$. From Lemma \ref{sol} and \ref{pi}, then (\ref{peel}) will no longer have
a unique local solution over $(t, t+\delta t]$ for some suitable $\delta t > 0$.
\par From the above Lemmas and the rank deficiency of the Jacobian $\mathcal{D}_t$ at the stick-slip, we conclude the following.
\begin{thm}
 The peeling model (\ref{peel}) is a DAE that has variable differential index which is at least one and can be arbitrarily high making the  
local solutions
 of (\ref{peel}) non-unique at time points leading up to and following the maximum adhesion. \label{ithm}
\end{thm}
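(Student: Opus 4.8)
The plan is to assemble Theorem~\ref{ithm} as a direct corollary of Lemmas~\ref{sol}, \ref{pi}, and \ref{hi}, together with the rank-deficiency observation immediately preceding the statement. There is essentially nothing new to prove; the work is in organizing the three regimes of the constitutive function $\phi$ and invoking the correct lemma in each. First I would record the ``generic'' case: when the adhesion is away from its maximum, $|\partial\phi_t/\partial v_t|$ is $O(\tau^0)$ or larger, so the Schur complement $J_t$ is bounded away from zero and, by Lemma~\ref{pi}, the local differential index is $\lceil\log_\tau|J_t|\rceil+1 = 1$; equivalently $v_t$ solves (\ref{sac}) for $v$ as a function of $(x_t,V)$ by the implicit function theorem, and Lemma~\ref{sol} gives a unique local solution. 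This establishes the ``at least one'' part and identifies the index-1 regime.

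Next I would treat the approach to and departure from the maximum. By the hypotheses carried over into Lemma~\ref{hi}, $\partial\phi_t/\partial v_t$ decreases monotonically and smoothly through zero at the maximizing value $v_t$ and is negative just beyond it. Exactly at the maximum, Lemma~\ref{hi} gives index two (the leading Neumann term $\tau\tilde F_t\,\partial f_t/\partial v_t$ is $O((k/l)\tau)$ and nonzero, so $|J_t|=O(\tau)$ and $\lceil\log_\tau|J_t|\rceil+1=2$). In a punctured neighborhood $v_s\in\{[v_t-\delta v_t,v_t),(v_t,v_t+\delta v_t)\}$, the small value of $\partial\phi_s/\partial v_s$ cancels the first $m$ terms of the Neumann expansion for any prescribed $m\ge1$, forcing $|J_t|=O(\tau^{m+1})$ and hence, by Lemma~\ref{pi}, differential index $m+2$; letting $m$ grow shows the index is unbounded over the stick--slip regime. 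I would state this as: the index takes the value~1 in the sticking/slipping regime away from the maximum, jumps to at least~2 at the maximum adhesion, and can be made arbitrarily large at points arbitrarily close to the maximum — hence the model has \emph{variable} (switching) differential index.

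Finally I would close the loop on non-uniqueness. As the index grows without bound, $J_t\to0$, so the full Jacobian $\mathcal{D}_t$, whose invertibility is equivalent to $J_t\ne0$, becomes rank deficient; by the contrapositive of Lemma~\ref{sol} the hypotheses guaranteeing a unique local solution on $(t,t+\delta t]$ fail precisely at the time points leading up to and following the instant of maximum adhesion. Combining the three regimes yields the theorem verbatim: (\ref{peel}) is a DAE of variable differential index, at least one and arbitrarily high, with non-unique local solutions near maximum adhesion.

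The only real subtlety — and the step I would flag as the main obstacle — is the logical direction of the non-uniqueness claim. Lemmas~\ref{sol} and \ref{pi} are phrased as \emph{sufficient} conditions ($J_t\ne0$ implies a unique local solution; $|J_t|=O(\tau^\nu)$ gives index $\nu+1$), so turning ``$J_t\to0$'' into genuine failure of uniqueness requires either reading Lemma~\ref{sol} as an equivalence via the implicit function theorem's singular-Jacobian converse, or arguing that the unbounded sequence of distinct local index values is itself incompatible with a single well-defined local solution branch. I would make this explicit by noting that $\mathcal{D}_t$ singular means the implicit function theorem provides no local solution map at $(t,x_t,v_t)$, and that the consistency manifold defined by (\ref{sac}) develops a fold there (two nearby values of $v$ satisfying the constraint for the same $x$), which is the standard mechanism by which high-index DAEs lose solution uniqueness; everything else is bookkeeping across the three cases already handled by the preceding lemmas.
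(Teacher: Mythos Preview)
Your proposal is correct and matches the paper's approach exactly: the paper offers no separate proof of Theorem~\ref{ithm} at all, presenting it as an immediate corollary of Lemmas~\ref{sol}, \ref{pi}, \ref{hi} and the preceding rank-deficiency remark, with the single sentence ``From the above Lemmas and the rank deficiency of the Jacobian $\mathcal{D}_t$ at the stick-slip, we conclude the following.'' Your write-up is actually more careful than the paper's, and the subtlety you flag about the logical direction of the non-uniqueness claim is a genuine gap in the paper that it simply glosses over.
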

Physically, the high differential index near the maximum adhesion
and the slip affects the process modeled by (\ref{peel}) over short time sub-intervals over which 
$|J_t| \approx O(\tau^\nu)$, $\nu+1$ being the local differential index of (\ref{peel}). Thus, the two time scales emerge with respect to
the stick-slip dynamics of the peeling of an adhesive tape: one during when differential index is one and the other when
the differential index rapidly increases to an arbitrarily high value. The second time scale is of 
interest with respect to a study of nonlinear bifurcation, the behavior of the peeling velocity $v$ at slip and the 
homogenization of $v$ over the same time scale. 
\section{Time Scale of the Stick-Slip Dynamics}
In this section we investigate the time scale of the stick-slip process.
\begin{lem}
Let $1 \gg \tau > 0$.  Then \begin{eqnarray}
 J_t ~\delta v_t & = &- \frac{\partial \big((1 +\frac{l \sin\alpha}{L(\alpha)}) F(u,\alpha) \big)_t}{\partial x_t} \delta x_t           \label{dvdx}                 
                            \end{eqnarray}
where $J_t$ is as defined in Lemma \ref{sol}. \label{dvdxlem}
\end{lem}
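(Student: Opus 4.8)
The plan is to read off (\ref{dvdx}) as the first variation of the algebraic constraint (\ref{sac}) about a solution $(x_t,v_t)$ of (\ref{peel}), closed up with the sensitivity relation (\ref{xinv}) that the differential subsystem (\ref{av})--(\ref{str}) imposes between the differential variables $x_t=(\alpha_t,\omega_t,u_t)^{\mathsf{T}}$ and the algebraic variable $v_t$.

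First I would perturb the solution. Both $(x_t,v_t)$ and any neighbouring solution satisfy the constraint $\big(1+\frac{l\sin\alpha}{L(\alpha)}\big)F(u,\alpha)-\phi(v,V)=0$ identically in time, so its total differential vanishes: $\tilde{F}_t\,\delta x_t^{\mathrm{tot}}-\frac{\partial\phi_t}{\partial v_t}\,\delta v_t=0$, where $\delta x_t^{\mathrm{tot}}$ is the \emph{full} perturbation of $x_t$. This is not yet (\ref{dvdx}): it carries $\frac{\partial\phi_t}{\partial v_t}$, whereas (\ref{dvdx}) carries $J_t$, and the difference is exactly the term $\tau\tilde{F}_t\big(\mathcal{I}-\tau\frac{\partial f_t}{\partial x_t}\big)^{-1}\frac{\partial f_t}{\partial v_t}$ of $J_t$.

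To account for this term I would split $\delta x_t^{\mathrm{tot}}$ into the component that the integrated differential equations (\ref{av})--(\ref{str}) force to track a change in $v_t$ and a remaining, independent component $\delta x_t$. By the fundamental-theorem-of-calculus / implicit-function argument already used to obtain (\ref{xinv}) (write $x_t=x_{t_0}+D^{-1}f$ with $|D^{-1}|=O(\tau)$ and linearize), the slaved component is $\frac{\partial x_t}{\partial v_t}\,\delta v_t=\tau\big(\mathcal{I}-\tau\frac{\partial f_t}{\partial x_t}\big)^{-1}\frac{\partial f_t}{\partial v_t}\,\delta v_t$. Substituting $\delta x_t^{\mathrm{tot}}=\delta x_t+\frac{\partial x_t}{\partial v_t}\delta v_t$ into the differentiated constraint and collecting the coefficient of $\delta v_t$ gives $\big(\tilde{F}_t\frac{\partial x_t}{\partial v_t}-\frac{\partial\phi_t}{\partial v_t}\big)\delta v_t=-\tilde{F}_t\,\delta x_t$, and the bracket on the left equals $J_t$ by (\ref{xinv}); this is (\ref{dvdx}). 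The same identity can be cross-checked by block-Gaussian elimination on the Jacobian $\mathcal{D}_t$ of Lemma \ref{sol}: the second block row reproduces the differentiated constraint, and eliminating $\delta x_t^{\mathrm{tot}}$ through the first block row (whose Schur complement is $J_t$) leaves $J_t\,\delta v_t$ equal to $-\tilde{F}_t$ times the data-driven part of the perturbation of the differential variables.

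I expect the main obstacle to be conceptual rather than computational: fixing the precise meaning of $\delta x_t$ in (\ref{dvdx}), namely the part of the perturbation of $(\alpha,\omega,u)$ that is \emph{not} slaved to $v$ through the flow --- it is exactly the slaved part that promotes $-\frac{\partial\phi_t}{\partial v_t}$ to $J_t$. As in Lemmas \ref{sol} and \ref{pi}, the manipulation treats $D^{-1}$ as multiplication by $\tau$, so (\ref{dvdx}) is a leading-order statement in the regime $1\gg\tau>0$, which I would state explicitly; once $\delta x_t$ is interpreted this way and (\ref{xinv}) is invoked, the remaining steps are routine. Finally I would note the payoff: (\ref{dvdx}) shows $|\delta v_t|=O(\tau^{-\nu})$ whenever $|J_t|=O(\tau^{\nu})$, so the peeling velocity evolves on the exponentially faster time scale of the high-index stick--slip regime.
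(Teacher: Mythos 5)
Your proposal is correct and follows essentially the same route as the paper: take total differentials of the algebraic constraint (\ref{sac}), decompose the perturbation of $x_t$ into the part slaved to $v_t$ through the flow (the sensitivity $\frac{\partial x_t}{\partial v_t}$ from (\ref{xinv})) and an independent $\delta x_t$, and recognize the resulting coefficient of $\delta v_t$ as the Schur complement $J_t$. The paper states this in two lines and cites (\ref{xinv}); your write-up merely spells out the decomposition and the block-Gaussian cross-check, which is consistent filler rather than a different argument.
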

\begin{proof}
 Taking total differentials on (\ref{sac}) we get 
 \begin{eqnarray} && \big( -\frac{\partial \phi_t}{\partial v_t} + 
\frac{\partial \big( (1 + \frac{l \sin\alpha}{L(\alpha)}) F(u,\alpha) \big)_t}{\partial x_t}\frac{\partial x_t}{\partial v_t} \big) \delta v_t = 
-\frac{\partial \big((1 + \frac{l \sin\alpha}{L(\alpha)}) F(u,\alpha) \big)_t}{\partial x_t} \delta x_t  \nonumber \end{eqnarray}
where $\frac{\partial x_t}{\partial v_t} = \tau  \big( {\mathcal I} - \tau \frac{\partial f_t}{\partial x_t} \big)^{-1} \frac{\partial f_t}{\partial v_t}$ from Lemma \ref{hi}.
\end{proof}
\begin{lem}
 Let $ |J_t| \ne 0$ be $O(\tau^{\nu}), ~1 \le \nu < \infty, ~ 0 < \tau \ll 1$, and $\eta_0\in [0,1]$. 
Then,
\begin{eqnarray} \sqrt{ \eta_0^2  |\alpha(s)-\alpha(t)|^2 +(1-\eta_0^2) |u(s)-u(t)|^2} \le K \tau^{\nu}  |v(s) - v(t)|, \label{tseq} \end{eqnarray} 
where $s = \arg \sup_{s \in [t, t+\delta t]} \sqrt{ |\alpha(s)-\alpha(t)|^2 +  |u(s)-u(t)|^2}$ with $\delta t \to 0, \delta t > 0$ 
at any time point $t \in [0, \infty)$.
\label{ts}
\end{lem}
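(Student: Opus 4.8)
The plan is to obtain (\ref{tseq}) from the linearized constraint identity (\ref{dvdx}) of Lemma \ref{dvdxlem}, inserting the hypothesis $|J_t|=O(\tau^{\nu})$ only at the last step. First I would specialize (\ref{dvdx}) to the increments over the shrinking window $[t,t+\delta t]$: writing $\delta x_t:=x(s)-x(t)$ and $\delta v_t:=v(s)-v(t)$, Lemma \ref{dvdxlem} gives $J_t\,\delta v_t=-\tilde F_t\,\delta x_t$ (as $\delta t\to 0$, up to terms of order $o(\|\delta x_t\|)$). Since the constraint expression $\big(1+\frac{l\sin\alpha}{L(\alpha)}\big)F(u,\alpha)$ does not contain $\omega$, the $\omega$-entry of $\tilde F_t$ vanishes, so
\[
\tilde F_t\,\delta x_t=\tilde F_{t,\alpha}\big(\alpha(s)-\alpha(t)\big)+\tilde F_{t,u}\big(u(s)-u(t)\big),
\]
a linear functional of exactly the two increments appearing on the left of (\ref{tseq}).

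The core of the argument is then the pointwise comparison $\sqrt{\eta_0^2|\alpha(s)-\alpha(t)|^2+(1-\eta_0^2)|u(s)-u(t)|^2}\le K'\,|\tilde F_t\,\delta x_t|$ with a finite $K'$ uniform in $\eta_0\in[0,1]$. This is where the DAE flow is essential: a single scalar relation cannot on its own control a two--component norm, so one exploits that the increment direction is pinned down by the dynamics. By the sensitivity identity (\ref{xinv}), $\delta x_t=\tau\big(\mathcal I-\tau\frac{\partial f_t}{\partial x_t}\big)^{-1}\frac{\partial f_t}{\partial v_t}\,\delta v_t$ to leading order in $\delta t$, so the pair $\big(\alpha(s)-\alpha(t),\,u(s)-u(t)\big)$ lies along the fixed direction determined by the $\alpha$- and $u$-components of $\frac{\partial f_t}{\partial v_t}$; on that direction the functional $\tilde F_{t,\alpha}\delta\alpha+\tilde F_{t,u}\delta u$ evaluates (at leading order) to a multiple of $\tilde F_t\frac{\partial f_t}{\partial v_t}$, the leading Neumann coefficient of $J_t$, which the explicit computation in the proof of Lemma \ref{hi} identifies as an $O(k/l)$, generically nonzero quantity. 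Because $\alpha$ and $u$ stay in the physically admissible range ($L(\alpha)>u$, $\sin\alpha$ and the geometric factors bounded), every coefficient here is bounded, so the ratio of the weighted $(\alpha,u)$-norm to $|\tilde F_t\,\delta x_t|$ is bounded; continuity in $\eta_0$ on the compact interval $[0,1]$ makes the bound uniform, yielding $K'$.

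Assembling the pieces, and using $|J_t|\le c\,\tau^{\nu}$ from the hypothesis together with $|\tilde F_t\,\delta x_t|=|J_t|\,|\delta v_t|$ from Lemma \ref{dvdxlem},
\[
\sqrt{\eta_0^2|\alpha(s)-\alpha(t)|^2+(1-\eta_0^2)|u(s)-u(t)|^2}\le K'\,|\tilde F_t\,\delta x_t|=K'\,|J_t|\,|\delta v_t|\le K'c\,\tau^{\nu}\,|v(s)-v(t)|,
\]
so (\ref{tseq}) holds with $K:=K'c$, and since nothing in the argument used a particular time it is valid for every $t\in[0,\infty)$. The hard part is the comparison step in the degenerate case: when the leading coefficient $\tilde F_t\frac{\partial f_t}{\partial v_t}$ itself vanishes --- precisely the high-differential-index regime --- one must pass to the first non-vanishing term of the expansion $\tau\tilde F_t\frac{\partial f_t}{\partial v_t}+\tau^2\tilde F_t\frac{\partial f_t}{\partial x_t}\frac{\partial f_t}{\partial v_t}+\cdots$ and balance it against $\frac{\partial\phi_t}{\partial v_t}$, reusing the cancellation mechanism already established in Lemma \ref{hi}, so that $|\tilde F_t\,\delta x_t|$ and the weighted increment norm stay comparable with an $O(1)$ constant independent of how large $\nu$ is.
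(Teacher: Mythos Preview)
Your route differs from the paper's, and the difference is not cosmetic. The paper never tries to pin down the direction of $\delta x_t$. Instead it pre-multiplies (\ref{dvdx}) by the pseudo-inverse $(\tilde F_t^{\mathsf T}\tilde F_t)^{+}\tilde F_t^{\mathsf T}$, obtaining
\[
(\tilde F_t^{\mathsf T}\tilde F_t)^{+}\tilde F_t^{\mathsf T}J_t\,\delta v_t+\Pi\,\delta x_t=0,\qquad \Pi=qq^{\mathsf T},\quad q=\tilde F_t^{\mathsf T}/\|\tilde F_t\|_2 .
\]
The left-hand weighted norm in (\ref{tseq}) is then \emph{identified} with $\|\Pi\,\delta x_t\|_2$, and $\eta_0$ is not a free parameter in $[0,1]$ but the $\alpha$-component of the unit vector $q$; the paper computes $q$ explicitly and reads off $\eta_0=O(u/l)$, with $q_\omega=0$ and $q_u=1-O((u/l)^2)$. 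Taking $2$-norms and Cauchy--Schwarz on the other side, together with $\|(\tilde F_t^{\mathsf T}\tilde F_t)^{+}\tilde F_t^{\mathsf T}\|_2=1/\|\tilde F_t\|_2=O(l/k)$ and $|J_t|=O(\tau^{\nu})$, closes the estimate. So the paper's argument is a projection step and needs nothing about where $\delta x_t$ points.

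Your core step --- invoking (\ref{xinv}) to assert that $\delta x_t$ lies along $\tau(\mathcal I-\tau\,\partial f_t/\partial x_t)^{-1}\partial f_t/\partial v_t$ --- is where the gap sits. The identity (\ref{xinv}) is a \emph{sensitivity} of $x$ in $v$, already baked into $J_t$ through Lemma~\ref{dvdxlem}; it is not the direction of the trajectory increment $x(s)-x(t)$, which to leading order is along $f(x_t,v_t)$. If you substitute your directional claim back into (\ref{dvdx}) you get $(-\partial\phi_t/\partial v_t+\tilde F_t\,\partial x_t/\partial v_t)\,\delta v_t=-\tilde F_t\,\partial x_t/\partial v_t\,\delta v_t$, i.e.\ $\partial\phi_t/\partial v_t=2\,\tilde F_t\,\partial x_t/\partial v_t$, which is an extra algebraic relation that does not hold along a generic solution --- a sign that (\ref{xinv}) and (\ref{dvdx}) are being double-counted. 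Consequently the comparability $\sqrt{\eta_0^2|\delta\alpha|^2+(1-\eta_0^2)|\delta u|^2}\le K'|\tilde F_t\,\delta x_t|$ is not established in the form you need, and the nonzero pivot you cite, $\tilde F_t\,\partial f_t/\partial v_t$, is not what actually pairs with the trajectory increment. The paper's pseudo-inverse trick avoids this entirely, at the price of a specific $\eta_0$ tied to $\tilde F_t$ rather than the uniform-in-$\eta_0$ bound you were reaching for.
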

\begin{proof}
We observe that $\|\tilde{F} \|_2 $ is of $O((k/l))$. From (\ref{dvdx}) in Lemma \ref{dvdxlem}, 
we obtain by pre-multiplying both sides with $(\tilde{F}^{\mathsf T}\tilde{F} )^{+}\tilde{F}^{\mathsf T}$,
\begin{eqnarray} (\tilde{F}^{\mathsf T}\tilde{F} )^{+}\tilde{F}^{\mathsf T} J_t \delta v_t + \Pi \delta x_t = 0  \label{lsqx} \end{eqnarray}
where $^{+}$ is the pseudo-inverse and $\Pi := (\tilde{F}^{\mathsf T}\tilde{F} )^{+}\tilde{F}^{\mathsf T}\tilde{F}=qq^{\mathsf T}$ 
is a rank one matrix, $q$ being the unit vector:
\begin{eqnarray*}
q:=\tilde{F}_t^{\mathsf T}/\|\tilde{F}_t \|_2 \approx
\left(
\begin{array}{ccc}
 \frac{ l^2 \left(\left(l^2+R^2\right) \cos (\alpha) -R \left(2 l+L(\alpha) \sin (\alpha )\right)\right) }
{L^2(\alpha) \left(l \sin (\alpha )+ L(\alpha)\right)}\frac{u}{l}+O\left(\left(\frac{u}{l}\right)^2\right) \cr 0 \cr
1-O\left(\left(\frac{u}{l}\right)^2\right)  \\
\end{array}
\right).
\end{eqnarray*}
Using the structure of $q$, and by taking norms we obtain from the right hand side of (\ref{lsqx})
\begin{eqnarray*}
 \|\Pi (x(s) - x(t)) \|_2  & =& \sqrt{ \eta_0^2 |\alpha(s)-\alpha(t)|^2 + (1-\eta_0^2) |u(s)-u(t)|^2}.
\end{eqnarray*}
in which $\eta_0\in [0,1]$ is a constant independent of $\tau$.
Then, taking norms on the both sides of (\ref{lsqx}) and by applying the Cauchy-Schwarz inequality, we get
\begin{eqnarray*}
 \sqrt{ \eta_0^2 |\alpha(s)-\alpha(t)|^2 + (1-\eta_0^2) |u(s)-u(t)|^2} 
  &=& \big\| -  (\tilde{F}^{\mathsf T}\tilde{F} )^{+}\tilde{F}^{\mathsf T} J_t \delta v_t  \big\|_2 \\
    & \le &  K_1 \tau^{\nu} K_2 \frac{k}{l} |v(s) - v(t)| \nonumber
\end{eqnarray*}
where $K_1, ~K_2$ are constants independent of $\tau$.
\end{proof} 
From the above Lemmas \ref{dvdxlem} and \ref{ts} together with Lemma \ref{pi}, and since $0 < \tau \ll 1$, 
we conclude the following result.
\begin{thm}
 In (\ref{peel}) the  peeling velocity $v_t$ changes exponentially faster than the time scale of change of either the peel front angle or 
the tape's tensile displacement or a linear combination of both, the exponent of the time scale being one less than the local 
differential index of (\ref{peel}). \label{expv}
\end{thm}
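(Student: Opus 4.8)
The plan is to assemble Lemma~\ref{dvdxlem}, Lemma~\ref{ts} and Lemma~\ref{pi} into a single statement comparing the rate of change of the algebraic variable $v_t$ with that of the differential variables $\alpha_t,u_t$. First I would recall the finite-difference estimate of Lemma~\ref{ts}: for every $\eta_0\in[0,1]$, with $s$ the supremizing point in $[t,t+\delta t]$ and $\delta t\to 0^+$,
\[
\sqrt{\eta_0^2\,|\alpha(s)-\alpha(t)|^2 + (1-\eta_0^2)\,|u(s)-u(t)|^2}\;\le\;K\,\tau^{\nu}\,|v(s)-v(t)|,
\]
where $K$ is independent of $\tau$ and $\nu$ and where $|J_t|$ is $O(\tau^{\nu})$. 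By Lemma~\ref{pi} the local differential index of~(\ref{peel}) equals $\lceil\log_{\tau}|J_t|\rceil + 1 = \nu+1$, so $\nu$ is precisely one less than the local differential index; this pins down the exponent claimed in the theorem.

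Second I would convert the spatial inequality into a time-scale statement. By Lemma~\ref{sol} the local solution is differentiable on $[t,t+\delta t]$, so dividing both sides by $\delta t$ and letting $\delta t\to 0^+$ replaces the increments by the corresponding directional rates, yielding $\sqrt{\eta_0^2\,\dot\alpha_t^2 + (1-\eta_0^2)\,\dot u_t^2}\le K\tau^{\nu}\,|\dot v_t|$. Hence the characteristic rate of either differential variable, or of any convex blend of the two selected through $\eta_0$, is smaller than that of $v_t$ by the factor $K\tau^{\nu}$; equivalently, writing $T_{\bullet}$ for the associated local time scales, one has $T_v/T_{\alpha}\le K\tau^{\nu}$, $T_v/T_u\le K\tau^{\nu}$, and correspondingly for the combined scale.

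Third I would observe that since $0<\tau\ll 1$ we have $\tau^{\nu}=\exp(-\nu|\ln\tau|)$, which decays exponentially in $\nu$. Because $\nu$ is the local differential index minus one (Lemma~\ref{pi}), the factor $\tau^{-\nu}$ by which $v_t$ outpaces $\alpha_t$, $u_t$, or their linear combination is exponentially large in the differential index minus one, which is exactly the assertion of the theorem.

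The step I expect to be the main obstacle is the passage from the finite-difference bound of Lemma~\ref{ts} to a bona fide time-scale statement: one must check that the $\arg\sup$ defining $s$ and the limit $\delta t\to 0^+$ interact correctly, so that the inequality survives division by $\delta t$ and the limiting quantities on the two sides are indeed the relevant directional rates. I would handle this by invoking the continuity of $\dot\alpha_t,\dot u_t,\dot v_t$ from Lemma~\ref{sol} and by keeping $J_t\ne 0$ throughout so that the right-hand side stays nontrivial; the remaining manipulations are routine bookkeeping with the order symbols in $\tau$.
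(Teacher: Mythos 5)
Your proposal takes the same route as the paper: the theorem is stated there as an immediate consequence of Lemmas~\ref{dvdxlem}, \ref{ts}, and \ref{pi} together with $0 < \tau \ll 1$, with no separate proof supplied. Your passage from the finite-difference bound of Lemma~\ref{ts} to a rate statement, and the observation that $\tau^{\nu} = \exp(-\nu\,|\ln\tau|)$, simply spell out what the paper leaves implicit, so the argument is correct and structurally identical.
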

As a corollary to Theorem \ref{expv} and from Lemma \ref{hi} it is obvious that in the neighborhood of the time points at which
$\big| \frac{\partial \phi_t}{\partial v_t} \big| \le \eta \ll 1$, i.e., at the stick-slips, 
the magnitude of the change in the peeling velocities can be arbitrarily high. Thus the peeling velocity
undergoes a (relatively) stiff change when the maximum adhesion is approached or just following the slip. 
\par
Consequently, we have two distinct regimes in the dynamics of peeling of an adhesive tape: 
\begin{itemize}
 \item a regime during which the variables $\alpha,~u$ change with respect to time in the same scale as the peeling velocity $v$. 
This happens when the local differential index of (\ref{peel})
is one, i.e., $\big| -\frac{\partial \phi_t}{\partial v_t} + \tau \tilde{F}_t
 \big( {\mathcal I} - \tau \frac{\partial f_t}{\partial x_t} \big)^{-1} \frac{\partial f_t}{\partial v_t} \big| \approx O(\tau^0)~\mbox{or greater},~\tau>0, \tau\to0  $. 
We call this the {\it slow scale}.
\item another regime when $\big|-\frac{\partial \phi_t}{\partial v_t} + \tau \tilde{F}_t
 \big( {\mathcal I} - \tau \frac{\partial f_t}{\partial x_t} \big)^{-1} \frac{\partial f_t}{\partial v_t} \big| \approx O(\tau^m),~\tau>0, \tau\to0,~m>0$,
 i.e., when the local differential index of (\ref{peel}) 
is greater than one. This is the stick or adhesion regime followed by the slip, during which the peeling velocity changes 
exponentially faster compared to the change in $\alpha$ or $u$ or in both . We shall refer to this as the {\it fast scale}.
\end{itemize}
We remark that as a straightforward consequence of Lemma \ref{ts} a reformulation of (\ref{sac}) as a an ODE 
in which a small positive quantity is multiplied with the time rate of change of the peeling velocity may not always 
satisfy the constitutive relationship (\ref{sac}), especially, in the fast scale of the stick-slip regime when the rate of change of the peeling velocity 
is as large as the inverse of the small multiplier. 

In this context, we consider the ODE formulation in \cite{de2005missing}, which is arrived at by 
considering an additional kinetic energy term in the Lagrangian due to the stretch rate of the extremely small mass of the tape and 
not by simply introducing a singular ODE. The constitutive relationship (\ref{sac}) then is an ODE and not an algebraic equation.
\begin{eqnarray}
 m \ddot{u} & = & \bigg(1+ \frac{l \sin\alpha}{L(\alpha)}\bigg) F(u,\alpha) - \phi (v,V),   \nonumber 
\end{eqnarray}
where $0 <m \ll 1$ is a small mass, involves $m \dot{v}_t$ on the left hand side. If $m$ is such that 
$|\dot{v}_t| $ grows as $O(1/m)$ or faster at the stick-slip points at which the Schur complement tends to go to zero, 
then, $\bigg(1 + \frac{l \sin\alpha}{L(\alpha)}\bigg) F(u,\alpha) - \phi (v,V)$ does not necessarily go to zero in the high differential index stick-slip regime. 
Also, $|\alpha|$ and $|\dot{\alpha}|$ do not grow as fast as $v_t$ at the the stick-slip points with high local differential index. In such a case 
the role of the small mass becomes that of an inertial regularization parameter relative to the nearly singular ODE model of the peeling dynamics. 
If an $m \ll 1/|\dot{v}_t|$ is chosen, this formulation has the same characteristics as a high differential index constraint. 
In the slower scale without the stick-slip (in which the local differential index is one) $v_t$ does not change exponentially faster in time 
and the above formulation satisfies (\ref{sac}) in the limit as $m \to 0$. However, in the present work we have assumed
that the stretched tape is mathematically massless, i.e., any kinetic energy of the stretching tape is negligible. 
This is consistent with the standard experimental set-up 
described in works such \cite{ciccotti13}. Thus our approach obtains the DAE (\ref{peel}) and concerns about its behavior and regularization 
in the high differential index regime.  
\section{The Stick-Slip Dynamics}
In this section we see if the dynamics in the fast scale during the intermittent stick-slip affect nonlinear bifurcation in this regime. 
As $|J_t|\to 0$, we show that the model (\ref{peel}) is driven by the kinetics and kinematics (\ref{av}--\ref{str}) and not the adhesion-shear constitutive relationship
(\ref{sac}). As the tape sticks almost the maximum, due to the rotational inertia of the spool peel front angle $\alpha$ changes and due to
the constant pulling velocity $V$ the tensile displacement $u$ changes, perturbing the dynamics (\ref{av}--\ref{str}).
\begin{lem}
 Let $(x_t,v_t)$ satisfy (\ref{peel}) at $t$ such that $|J_t| \approx \tau^\nu,~ \nu \gg 1$ and $u \ne 0$. 
A perturbation $|\delta \alpha_t| \tau^{-\nu} \ne 0$ of the  peel front angle $\alpha_t$ or $|\delta u_t| \tau^{-\nu} \ne 0$
of the tensile displacement $u_t$ or a combination of both regularizes the Jacobian $\mathcal{D}_t$.
\label{reglem}
 \end{lem}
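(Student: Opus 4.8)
The plan is to show that shifting the differential variables $\alpha_t$ and/or $u_t$ forces, through the constraint (\ref{sac}), a correspondingly amplified shift of the algebraic variable $v_t$, and that this shift carries $v$ out of the flat neighbourhood of the maximiser of $\phi$ on which the near-cancellation in $J_t$ rested; evaluated at the shifted state, $J_t$ is then of order $\tau^0$, so $\mathcal{D}_t$ is well-conditioned and, by Lemma \ref{pi}, the local differential index has dropped back to one.

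First I would invoke Lemma \ref{dvdxlem}. Regarding the perturbation of $\alpha$ and $u$ as the ``free'' part $\delta x_t = (\delta\alpha_t,0,\delta u_t)^{\mathsf T}$ of the motion of $x$ --- the part driven by the spool's rotational inertia through (\ref{av})--(\ref{tb}) and by the constant pull $V$ through (\ref{str}), not by $v$ --- the relation (\ref{dvdx}), i.e. $J_t\,\delta v_t = -\tilde F_t\,\delta x_t$, gives the compensating change $\delta v_t = -J_t^{-1}\tilde F_t\delta x_t$ needed to keep (\ref{sac}) satisfied. Using $\|\tilde F_t\|_2 = O(k/l)$ together with the explicit leading form of $q=\tilde F_t^{\mathsf T}/\|\tilde F_t\|_2$ recorded in the proof of Lemma \ref{ts} --- whose $\omega$-component is zero, whose $u$-component is $1+O((u/l)^2)$, and whose $\alpha$-component carries a factor $u/l$ --- a pure $u$-perturbation gives $|\delta v_t| \asymp \tau^{-\nu}(k/l)|\delta u_t|$, while a pure $\alpha$-perturbation gives $|\delta v_t| \asymp \tau^{-\nu}(ku/l^2)|\delta\alpha_t|$. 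This is exactly where the hypothesis $u\ne0$ enters: if $u=0$ the $\alpha$-channel of the constraint Jacobian $\tilde F_t$ vanishes to leading order and a perturbation of $\alpha$ alone produces no first-order excursion of $v$, whereas the $u$-channel always does. In either admissible case the induced peeling-velocity excursion has magnitude $\asymp|\delta\alpha_t|\,\tau^{-\nu}$ (resp. $|\delta u_t|\,\tau^{-\nu}$), the amplified quantity named in the statement.

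Next I would re-evaluate $J_t = -\frac{\partial\phi_t}{\partial v_t} + \tau\tilde F_t\bigl(\mathcal{I} - \tau\frac{\partial f_t}{\partial x_t}\bigr)^{-1}\frac{\partial f_t}{\partial v_t}$ at the perturbed state $(\alpha_t+\delta\alpha_t,\,\omega_t,\,u_t+\delta u_t,\,v_t+\delta v_t)$. The Neumann-series part is $O(\tau)$ and smooth in $(\alpha,u)$, hence still $O(\tau)$ after the perturbation. The term $-\frac{\partial\phi_t}{\partial v_t}$, on the other hand, was anomalously small --- that is the content of being near the maximum of $\phi$ (Lemma \ref{hi}) --- but since $\phi$ depends on $v$ and the constant $V$ only, and $\frac{\partial\phi}{\partial v}$ is smooth and monotone through the maximiser $v^\ast$ with $\frac{\partial\phi}{\partial v}<0$ for $v>v^\ast$, the large, $\tau$-independent excursion $|\delta v_t|\asymp|\delta\alpha_t|\,\tau^{-\nu}$ carries $v_t+\delta v_t$ clear of the flat zone around $v^\ast$, so that $\bigl|\frac{\partial\phi}{\partial v}\bigr|$ evaluated there is $O(\tau^0)$ and bounded away from zero. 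Consequently the perturbed $J_t = -\frac{\partial\phi}{\partial v}\big|_{v_t+\delta v_t} + O(\tau)$ is $O(\tau^0)$ and non-zero; by Lemma \ref{pi} the local differential index is one, and by Lemma \ref{sol} the perturbed $\mathcal{D}_t$ is invertible and well-conditioned, i.e. regularized. The combined-perturbation case is immediate because the right-hand side $-\tilde F_t\delta x_t$ of (\ref{dvdx}) is linear in $\delta x_t$; I would close by noting that these very perturbations of $\alpha$ and $u$ are supplied automatically by the live kinematics (\ref{av})--(\ref{str}) over an arbitrarily short time interval, which is the slip mechanism asserted at the start of the section.

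The step I expect to be the main obstacle is controlling the size of $\delta v_t$ rigorously rather than only through the linearization (\ref{dvdx}): one must check that $v_t+\delta v_t$ still lies in the physically modelled range of $v$ (so $\frac{\partial\phi}{\partial v}$ stays well-defined and of the asserted sign), that the induced excursion is genuinely of order $\tau^0$ --- not merely much larger than $\tau^\nu$ --- so that $|J_t|$ recovers all the way to order $\tau^0$ rather than to some intermediate order $\tau^{\nu'}$ with $0<\nu'<\nu$, and that passing from the first-order relation (\ref{dvdx}) to the actual implicit-function-theorem solution branch of (\ref{sac}) on the perturbed data does not spoil these estimates. All of this rests on the smoothness and monotonicity hypotheses on $\frac{\partial\phi}{\partial v}$ inherited from Lemma \ref{hi} and on the explicit leading behaviour of $\tilde F_t$, hence again on $u\ne0$ for the $\alpha$-route.
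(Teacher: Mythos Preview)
Your argument is correct and follows essentially the same route as the paper: a perturbation of $\alpha$ and/or $u$ forces, via the constraint relation, an amplified $O(\tau^{-\nu})$ excursion of $v$ that moves $\partial\phi/\partial v$ back to $O(\tau^0)$, restoring invertibility of the Schur complement and hence of $\mathcal{D}_t$. The only cosmetic difference is that the paper packages the amplification step by citing the inequality (\ref{tseq}) of Lemma~\ref{ts} directly (reading $\eta_0>0$ from $u\ne0$), whereas you rebuild that estimate inline from Lemma~\ref{dvdxlem} together with the explicit leading form of $q=\tilde F_t^{\mathsf T}/\|\tilde F_t\|_2$ extracted in the proof of Lemma~\ref{ts}; your version is more explicit but could be shortened by invoking (\ref{tseq}) at once.
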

\begin{proof} 
By (\ref{tseq}) in Lemma \ref{ts}, a sufficient perturbation  $|\delta \alpha_t| \tau^{-\nu} \ne 0$ of the  peel front angle $\alpha_t$ or $|\delta u_t| \tau^{-\nu} \ne 0$
of the tensile displacement $u_t$ or a combination of both can affect a non-zero change $\delta v_t$, since
$\eta_0 > 0$ in (\ref{tseq}) for $u \ne 0$. When
$\frac{\partial \phi}{\partial v}(v_t+\delta v_t, V)$ is $O(\tau^0)$ or more, the Schur complement $J_t$ 
of the Jacobian ${\mathcal D}_t$ of (\ref{peel}) 
becomes non-zero
and ${\mathcal D}_t$ becomes invertible due to this regularization affected by the perturbation. \end{proof} 
Regularization of the Jacobian ${\mathcal D}_t$ of (\ref{peel}) implies that the DAE (\ref{peel}) can 
have a solution over some time interval containing the time point $t$ in the stick-slip regime. 
However this solution is not unique since the condition number of ${\mathcal D}_t$ now depends on the perturbation
$|\delta \alpha_t|$ from the rotational inertia of the spool or $|\delta u_t|$ of the tensile diplacement 
of the tape due to the constant $V$ or both. Lemma \ref{reglem} 
shows how the dynamics at stick-slip may continue drawing from the perturbations from rotational inertia
 of the spool or from the constant pulling velocity or both, rather than the relaxation of the constitutive relationship (\ref{sac}). 
 \subsection{Local Nonlinear Bifurcation}
 Let $(x_t,v_t)$ at a time point $t$ in the stick-slip regime satisfy (\ref{peel}) such that the local differential index of (\ref{peel}) is 
 significantly more than unity and possibly arbitrarily large. 
Let the DAE (\ref{peel}) have a solution $({\mathsf x}(s),{\mathsf v}(s)),~s \in {\mathcal T} 
\subset {\mathbb R}$, ${\mathcal T}$ being a time interval containing $t$. 
 Then, following \cite{deimling}(cf. Chapter 10, Definition 28.1), we define $(x_t,v_t)$ as a {\it local nonlinear bifurcation} point of (\ref{peel}) 
iff $({\mathsf x}(s),{\mathsf v}(s), s \in{\mathcal T} ) = \lim_{n \to \infty} ({\mathsf x}^{(n)}(s), {\mathsf v}^{(n)}(s),  s \in {\mathcal T}^{(n)})$ 
hold with $({\mathsf x}^{(n)}(s), {\mathsf v}^{(n)}(s),  s \in{\mathcal T}^{(n)})$ being a solution of (\ref{peel}) over a time interval 
${\mathcal T}^{(n)}$ containing $t$ for each $n=1, 2, \cdots$ such that 
$({\mathsf x}^{(n)}(s), {\mathsf v}^{(n)}(s),  s \in{\mathcal T}^{(n)}) \ne ({\mathsf x}(s),{\mathsf v}(s), s \in{\mathcal T} )$ for all $n$.
 \begin{thm}
Let $(x_t,v_t)$ together with $|J_t| \to 0$ satisfy (\ref{peel}) in the fast scale of the stick slip dynamics. Then 
(\ref{peel}) has a local nonlinear bifurcation point at $(x_t,v_t)$ in the stick-slip regime of the peeling of an adhesive tape. \label{bif}
 \end{thm}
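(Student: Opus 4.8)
The plan is to realize $(x_t,v_t)$ as the limit of a sequence of pairwise distinct genuine solutions of (\ref{peel}), using the perturbation mechanism of Lemma~\ref{reglem} together with the multivaluedness of the shear--adhesion relation (\ref{sac}) in a neighbourhood of a maximum of $\phi$. Since $|J_t|\to0$, Lemma~\ref{hi} gives an arbitrarily large local differential index and the remark following it gives the rank deficiency of $\mathcal{D}_t$, so by Theorem~\ref{ithm} uniqueness fails at $(x_t,v_t)$; this is precisely the room in which a bifurcation can live. Let $(\mathsf{x}(s),\mathsf{v}(s))$, $s\in\mathcal{T}$, denote the hypothesised solution of (\ref{peel}) through $(x_t,v_t)$, with $\mathcal{T}=[t,t+\delta t]$ small.

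For each $n$ I would take a nonzero perturbation, e.g.\ $\delta u_t^{(n)}=-1/n$ (or a corresponding $\delta\alpha_t^{(n)}\to0$), chosen so that the shear term $\big(1+\frac{l\sin\alpha}{L(\alpha)}\big)F(u,\alpha)$ is \emph{lowered}, and then solve (\ref{sac}), $\phi(v_t+\delta v_t,V)=\big(1+\frac{l\sin\alpha}{L(\alpha)}\big)F(u,\alpha)\big|_{x_t+\delta x_t^{(n)}}$, for $\delta v_t$. Because $\phi(\cdot,V)$ has a maximum at $v_t$ with $\frac{\partial\phi_t}{\partial v_t}=0$ and the stated sign and monotonicity behaviour of $\frac{\partial\phi_t}{\partial v_t}$ around it, this equation has two real branches $\delta v_t^{(n)}\to0$, one on each side of $v_t$; picking the slip-side branch yields a consistent state $(x_t^{(n)},v_t^{(n)})\ne(x_t,v_t)$ with $(x_t^{(n)},v_t^{(n)})\to(x_t,v_t)$. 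For each fixed $n$ the offset $v_t^{(n)}-v_t$ is a genuine (non-$\tau$-scaling) nonzero quantity, so $\frac{\partial\phi}{\partial v}(v_t^{(n)},V)$ is $O(\tau^0)$; hence by Lemma~\ref{reglem} the Jacobian $\mathcal{D}_t$ is regular at $(x_t^{(n)},v_t^{(n)})$, and Lemma~\ref{sol} (via Lemma~\ref{dvdxlem}) supplies a unique local solution $(\mathsf{x}^{(n)},\mathsf{v}^{(n)})$ of (\ref{peel}) on some interval $\mathcal{T}^{(n)}\ni t$.

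It then remains to show $(\mathsf{x}^{(n)},\mathsf{v}^{(n)})\to(\mathsf{x},\mathsf{v})$ as functions of time: at $s=t$ this is the construction, and for $s>t$ I would propagate it by a Gronwall/Bihari-type continuous-dependence estimate for the regularized problem (the same device invoked in the remark after Lemma~\ref{sol}), noting that $\|\mathcal{D}_t^{-1}\|$ grows with $n$ only through $|J_t^{(n)}|^{-1}$ and can be absorbed by contracting $\mathcal{T}^{(n)}$ with $n$. Since each $(\mathsf{x}^{(n)},\mathsf{v}^{(n)})$ differs from $(\mathsf{x},\mathsf{v})$ already at $s=t$, it is $\ne(\mathsf{x},\mathsf{v})$ on $\mathcal{T}^{(n)}$, so the hypotheses of the definition of a local nonlinear bifurcation point introduced above are met at $(x_t,v_t)$; using both branches of (\ref{sac}) furnishes two such approximating families, which strengthens the statement if one wishes.

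The step I expect to be the main obstacle is this last convergence argument in the presence of the nearly singular $\mathcal{D}_t$: the perturbed problems are only \emph{barely} regular (their Schur complements $J_t^{(n)}\to0$), and Lemma~\ref{ts} together with Theorem~\ref{expv} say $v$ evolves on an exponentially faster scale, so a continuous-dependence bound on a fixed interval would diverge; the remedy is to let $\mathcal{T}^{(n)}$ shrink toward $\{t\}$ at a rate tied to $|J_t^{(n)}|$, trading interval length for a uniform bound and relying on the fact that the definition only requires \emph{some} time interval $\mathcal{T}^{(n)}\ni t$. A secondary technical point is the consistency of the perturbed states — that the chosen branch of (\ref{sac}) is real and yields $\delta v_t^{(n)}\to0$ — which rests squarely on the monotonicity and sign assumptions on $\frac{\partial\phi_t}{\partial v_t}$ near its zero.
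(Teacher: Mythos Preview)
Your proposal is correct and follows essentially the same approach as the paper's own proof: perturb the differential variables $\alpha_t$ and/or $u_t$ by a sequence $\delta x_t^{(n)}\to 0$, invoke Lemma~\ref{reglem} to regularize the Schur complement and hence the Jacobian $\mathcal{D}_t$, obtain for each $n$ a unique local solution via Lemma~\ref{sol}/the implicit function theorem on an interval $\mathcal{T}^{(n)}\ni t$, and then pass to the limit with $\mathcal{T}^{(n)}$ shrinking toward $\{t\}$ as the Schur complement degenerates. Your write-up is in fact more careful than the paper's terse proof --- you explicitly secure consistency of the perturbed states by solving (\ref{sac}) on a chosen branch near the maximum of $\phi$, and you flag the need for a Gronwall/Bihari continuous-dependence estimate traded against shrinking $\mathcal{T}^{(n)}$ --- but these are elaborations of the same skeleton rather than a different route.
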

 \begin{proof}
  Let $\alpha_t$ or $u_t$ or both be perturbed as in the condition of 
  Lemma \ref{reglem} so that $\delta x_t^{(i)}, i=1, \cdots, n$ is non-zero in magnitude
  and that  $0 < \|\delta x_t^{(i+1)}\| <  \|\delta x_t^{(i)}\|$ with
  $\|\delta x_t^{(n)}\| \to 0$ ( $\|.\|$ being a suitable vector p-norm). Then, by Lemma \ref{reglem}, $|J_t| \to 0$, 
 is perturbed successively as $\big| J_t + \delta J_t^{(i)}\big|> 0, ~i =1, \cdots, n$. 
Due to the perturbation of the Schur complement $J_t$, we obtain a sequence of invertible Jacobian matrices 
${\mathcal D}_t^{(i)},~i =1, \cdots, n$ of (\ref{peel}). 
By the implicit function theorem, each perturbation leads to the existence of a unique solution $({\mathsf x}^{(i)}(s), {\mathsf v}^{(i)}(s))$
of (\ref{peel}) over a time interval ${\mathcal T}^{(i)}$ containing the time point $t$ at which $x_t, v_t$ satisfies (\ref{peel}). In the limit as $n \to \infty$, 
$\delta x_t^{(n)} \to 0$ so that $({\mathsf x}^{(n)}(s), {\mathsf v}^{(n)}(s))$ tends to  $({\mathsf x}(s),{\mathsf v}(s))$ and
${\mathcal T}^{(n)} \to {\mathcal T}$ with $| {\mathcal T} |$ tending to be arbitrarily small 
(due to the Schur complement approaching zero).
Then by the definition stated above, $(x_t,v_t)$ is a local nonlinear bifurcation point of (\ref{peel}).
 \end{proof}
While the perturbation due to rotational inertia and constant pulling velocity advances
 the dynamics, by Lemma \ref{ts} the time scale in which the peeling velocity changes is exponentially faster than that of $\alpha_t$ and/or $u_t$.
This indicates that the peeling velocity becomes highly sensitive to small perturbations at the local nonlinear bifurcation point. 
Thus there is a near jump in the peeling velocity with the shear force peeling the tape remaining almost unchanged.
The shear force is a function of the differential variables $\alpha$ and $u$ that change exponentially slowly compared to the peeling velocity. 
We conjecture that the exponentially fast jump like change in the peeling velocity in the stick-slip regime contributes to
the experimentally observed intense release of energy in acoustic or triboluminescence \cite{camara} form due the instantaneous breaking of the molecular bonds 
in the process of shearing of the adhesive over an almost negligible time interval.
 \section{Numerical Simulation: Homogenization of the Peeling Velocity}
It is obvious that capturing the arbitrarily large changes in the peeling velocity at the local nonlinear bifurcation points of (\ref{peel}) 
is difficult because of the possibly arbitrarily high local differential index of (\ref{peel}). 
Most DAE solvers cannot cope with differential index greater than $3$ due to Theorem 5.4.1 in \cite{petzoldbook}. 
A computational method involving finding repeatedly the bifurcation points in time in order to 
deal with the singular points by stopping and restarting the DAE integration algorithm with 
regularization is expensive especially for those $V$'s at which the stick-slip regime dominates.
At a pull velocity, $V$, for which the peeling dynamics is mostly in the fast scale of stick slip regime, 
a direct numerical simulation of the DAE (\ref{peel}) may be thus difficult and likely inefficient compared to the computational effort. 
This calls for a consistent reformulation of (\ref{peel}) so that the numerical integration of 
the reformulated algebraic constraint of (\ref{peel}) would correctly average out (weak convergence) the actual peeling velocity response
at the bifurcation points. 
\par
We assume that the pulling velocity $V$ is such that the peeling dynamics has intermittent 
stick-slip, i.e., both the fast and slow scales. We choose a $0 < \tau \ll 1 $ such that
 $J_s$ is invertible over a small time interval of length $\delta t$ at every $s \in [t, t+\delta t], ~ \delta t >0$. 
Then $\tau$ acts as the characteristic homogenization time scale in (\ref{dvdx}) and we re-write (\ref{dvdx}) in the following integral form over 
the time interval $[t, t+\delta t]$, 
\begin{subequations}
\begin{eqnarray}
 v_r & \approx &  v_t + \int_t^r  -J_s^{-1} \frac{\partial \big((1 + \frac{l \sin\alpha}{L(\alpha)}) F(u,\alpha) \big)_s}{\partial x_s} dx_s    \nonumber \\
 & \approx & \int_t^{r} -\bigg( -\frac{\partial \phi_s}{\partial v_s} 
      + \tau \tilde{F}_s \big( {\mathcal I} - \tau \frac{\partial f_s}{\partial x_s} \big)^{-1}\frac{\partial f_s}{\partial v_s}\bigg)^{-1} \tilde{F}_s f_s ds + v_t  \label{hv} \\
&\approx & \int_t^r -\bigg( -\frac{\partial \phi_s}{\partial v_s} 
+ \tau  \tilde{F}_s \frac{\partial f_s}{\partial v_s} + \tau^2 \tilde{F}_s\frac{\partial f_s}{\partial x_s}\frac{\partial f_s}{\partial v_s} +\tau^3  \tilde{F}_s\big(\frac{\partial f_s}{\partial x_s}\big)^2 \frac{\partial f_s}{\partial v_s} 
+ \cdots\bigg)^{-1} \tilde{F}_s f_s ds + v_t ~~~~~~~~~\label{hvex}
\end{eqnarray}
\label{hvv}
\end{subequations}
where $r \in [t, t+\delta t]$ and $ -J_s^{-1} \frac{\partial \big((1 + \frac{l \sin\alpha}{L(\alpha)}) F(u,\alpha) \big)_s}{\partial x_s}$ acts 
as a projection of the slower dynamics of differential variables $x$ onto the faster stick-slip dynamics of the peeling velocities $v$. 
The reformulation of (\ref{peel}) involving the homogenized peeling velocity is then obtained by appending (\ref{hv}) to (\ref{av}--\ref{str}).
By Lemma \ref{ts}, the integral equation (\ref{hv}) can be seen as a scaling of the time of relaxation of the peel front angle and tensile displacement
of the tape by $\tau^{\nu}$ to match that of the peeling velocity. If there is 
no stick-slip at any $s \in [t, t+\delta t]$, then $J_s$ will be invertible even with $\tau \to 0$ by the Lemmas \ref{pi} and \ref{hi}. On the other hand, if there is a 
bifurcation point in $[t, t+\delta t]$, then (\ref{peel}) will have a very high differential index at the bifurcation points by the Lemmas \ref{pi} and \ref{hi}
and consequently, (\ref{hv}) can be written only with a  $\tau$ significantly greater than zero if $J_s$ must be invertible at each $s \in [t, t+\delta t]$. 
Thus the minimum $\tau$ at which the locally high differential index DAE (\ref{peel}) can be integrated by a specific,
at least $A$-stable (cf. \cite{hairer2}) implicit numerical method to a prescribed accuracy will indicate the smallest time scale to which the solution of 
(\ref{peel}) can be numerically resolved by that particular method. 
\subsection{Multiple-scale Expansion}
The multiple time scale expansion of the homogenized peeling velocity $v$ is provided by (\ref{hvex}).
\begin{thm}
Let $0 < \tau \ll 1$ and $s \in [t, t+\delta t]$, $\delta t > 0$ being a suitable small time interval, 
be such that $$\tilde{J}_s :=- \frac{\partial \phi_s}{\partial v_s} + 
\sum_{p=1}^m \tilde{F}_s  \big(\frac{\partial f_s}{\partial x_s}\big)^{p-1}\frac{\partial f_s}{\partial v_s} \tau^p,~ \left|\tilde{J_s} \right| \approx O(\tau^{m}),~m \ge 0,$$
and that $J_s$ is invertibe. Also, let $J_s = \tilde{J}_s + \tilde{J}^{(r)}_s$. Then, for $r \in [t, t+\delta t]$,
\begin{eqnarray}
v_r & \approx & \int_t^r -\tilde{J}_s^{-1}(1 - \tilde{J}_s^{-1}(-\tilde{J}^{(r)}_s))^{-1}\tilde{F}_s f_s ds + v_t\nonumber \\
& \approx & \int_t^r -\tilde{J}_s^{-1}(1 - \tilde{J}_s^{-1}\tilde{J}^{(r)}_s + \tilde{J}_s^{-2}(\tilde{J}^{(r)}_s)^2 - \cdots) \tilde{F}_s f_s ds + v_t \nonumber \\
& \approx & v_t + \int_t^r \tilde{F}_s f_s (\Upsilon_1 \tau^{-m} + \Upsilon_2 \tau^{-m+1} + \cdots)_s ds,~\Upsilon_{i} \in {\mathbb R}~\mbox{independent of}~\tau,~i=1,2,\cdots~~~~~~
\label{msv}
\end{eqnarray} 
where $\Upsilon_{.}$ are the coefficients in the expansion (\ref{msv}).
\label{msvexp}
\end{thm}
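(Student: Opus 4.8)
The plan is to derive (\ref{msv}) directly from the integral representation (\ref{hvex}) by splitting the Schur complement $J_s$ into the truncated Neumann sum $\tilde{J}_s$ and the remainder $\tilde{J}^{(r)}_s := J_s - \tilde{J}_s$, which collects all terms of the Neumann series of order $\tau^{m+1}$ and higher. First I would observe that, by hypothesis, $|\tilde{J}_s| \approx O(\tau^m)$ while the tail $\tilde{J}^{(r)}_s$ is $O(\tau^{m+1})$, so the ratio $\tilde{J}_s^{-1}\tilde{J}^{(r)}_s$ is $O(\tau)$ in magnitude. Writing $J_s = \tilde{J}_s\bigl(1 - \tilde{J}_s^{-1}(-\tilde{J}^{(r)}_s)\bigr)$ and inverting gives $J_s^{-1} = \bigl(1 - \tilde{J}_s^{-1}(-\tilde{J}^{(r)}_s)\bigr)^{-1}\tilde{J}_s^{-1}$; since these are scalars the order of multiplication is immaterial. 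This justifies the first displayed line of (\ref{msv}).

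Next I would expand the factor $\bigl(1 - \tilde{J}_s^{-1}(-\tilde{J}^{(r)}_s)\bigr)^{-1}$ in a geometric (Neumann) series, legitimate because $|\tilde{J}_s^{-1}\tilde{J}^{(r)}_s| = O(\tau) < 1$ for $\tau$ sufficiently small, yielding the second line $\tilde{J}_s^{-1}\bigl(1 - \tilde{J}_s^{-1}\tilde{J}^{(r)}_s + \tilde{J}_s^{-2}(\tilde{J}^{(r)}_s)^2 - \cdots\bigr)$. Substituting this for $-J_s^{-1}$ in (\ref{hvex}) and carrying the factor $\tilde{F}_s f_s$ through the integral gives the integrand as a product of $\tilde{F}_s f_s$ with a scalar power series in $\tau$. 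The leading term is $\tilde{J}_s^{-1} = O(\tau^{-m})$; multiplying by the successive correction factors, each of relative order $\tau$, produces a series in nonnegative integer powers of $\tau$ multiplied by $\tau^{-m}$, i.e. terms of orders $\tau^{-m}, \tau^{-m+1}, \tau^{-m+2}, \dots$. Collecting the $\tau$-independent scalar coefficients and calling them $\Upsilon_1, \Upsilon_2, \dots$ (they depend on $s$ through $\tilde F_s$, $\partial f_s/\partial x_s$, $\partial f_s/\partial v_s$, $\partial\phi_s/\partial v_s$ but not on $\tau$) gives exactly (\ref{msv}).

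The main obstacle — really a matter of care rather than depth — is bookkeeping the orders in the Neumann expansion of $\tilde J_s^{-1}$ itself: $\tilde J_s$ is a sum $-\partial\phi_s/\partial v_s + \sum_{p=1}^m (\cdots)\tau^p$ whose \emph{smallest} surviving contribution is $O(\tau^m)$ only because the hypothesis of Lemma \ref{hi} guarantees the lower-order coefficients (including $-\partial\phi_s/\partial v_s$) cancel; so $\tilde J_s^{-1}$ must be expanded as $\tau^{-m}$ times a series in $\tau$, and one must check that this expansion, multiplied into the remainder series, does not generate spurious negative powers below $\tau^{-m}$. This follows because every factor of $\tilde J^{(r)}_s$ contributes at least one extra power of $\tau$ relative to $\tilde J_s$, so the product is bounded below in order by $\tau^{-m}$. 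I would also note that the integral in (\ref{hv}) is only an approximation (the homogenized reformulation), so the conclusion is stated with $\approx$; no convergence of the infinite series need be claimed beyond the formal asymptotic (multiple-scale) expansion, exactly as in the statement. Finally I would remark that $\delta t$ is chosen small enough that $J_s$ stays invertible on $[t,t+\delta t]$, which is possible precisely under the standing assumption that the chosen $\tau$ keeps $J_s$ nonzero there, and this is what makes $J_s^{-1}$ and hence the whole expansion well-defined along the solution.
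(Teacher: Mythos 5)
Your proposal is correct and is essentially what the paper intends: the paper's own proof is the single sentence ``Follows immediately from (\ref{hvv}),'' and you have simply supplied the elementary details — the splitting $J_s=\tilde J_s+\tilde J_s^{(r)}$, the factorization and geometric-series inversion of $J_s^{-1}$, and the order-of-magnitude bookkeeping justifying the form $\tau^{-m}(\Upsilon_1+\Upsilon_2\tau+\cdots)$. This is not a different route but a careful elaboration of the same one, and your remarks on the asymptotic (``$\approx$'') status of the expansion and on why no powers below $\tau^{-m}$ arise are accurate.
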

\begin{proof}
Follows immediately from (\ref{hvv}). 
\end{proof}
Each function $\tilde{F}_s f_s \Upsilon_i,~i=1,2,\cdots$ is the generalized time derivative of the component of the peeling velocity in the $(m-i+1)$th scale
with $\tau^{-m+(i-1)}$ as the test function. It is in this sense that the multiple time scale expansion indicates homogenization of the peeling 
velocity.

As $\tau \to 0$, the larger eigenvalues including the higher frequencies are captured in the multiple scale expansion (\ref{msv}). As $\tau \to 1$, $v$ is averaged to
the same time scale as the differential variables and the oscillations are damped out since $\tilde{F}_t \frac{\partial x_t}{\partial v_t}$ provides significant 
damping in (\ref{hvv}) to the effect of the shear force peeling the tape.

We remark here that (\ref{hvv}) is obtained by taking total differentials on (\ref{sac}) using the implicit function theorem; and is not an artifact of
derivation from the system Lagrangian, i.e., it does not alter the underlying physics of the system but simply exploits the mathematical structure
of the DAE (\ref{peel}) to introduce the characteristic time scale and the resultant homogenized ODE (\ref{hv}).
\subsection{Example of Homogenization}
The homogenization of $v$ can be elucidated by considering the following case. 
Let $(x_t,v_t)$ satisfying (\ref{peel}) at $t$ be such that $\big| \frac{\partial \phi_t}{\partial v_t}\big| < \eta,~ 0 \le \eta \ll 1$, and 
for a $0 < \tau \ll 1$, let $\big| -\frac{\partial \phi_t}{\partial v_t} + \tau \tilde{F}_t\frac{\partial f_t}{\partial v_t}  + 
\tau^2 \tilde{F}_t\frac{\partial f_t}{\partial x_t}\frac{\partial f_t}{\partial v_t} \big|$ be $O(\tau^4)$ and
 $\big|\tilde{F}_t \left(\frac{\partial f_t}{\partial x_t}\right)^2 \frac{\partial f_t}{\partial v_t}\big|$ be $O(\tau^0)$. 
Then the local differential index is $4$ and (\ref{peel}) is already in the fast scale of the stick slip dynamics. 
Further, suppose $\tau^4 \to 0$ but not $\tau^3$. If $[t, t+\tilde{t}]$ be such a time interval that the above conditions hold at any $s \in [t, t+\tilde{t}]$, then,
in the said time interval, using the re-formulation (\ref{hv}) of (\ref{sac}) would lead to the following homogenization as indicated by the multiscale 
expansion (\ref{hvex}).
\begin{eqnarray}
 v_r & \approx & \int_t^r  \bigg( -\tau^3 \tilde{F}_s \big(\frac{\partial f_s}{\partial x_s}\big)^2 \frac{\partial f_s}{\partial v_s}
\bigg)^{-1}f_s ds + v_t \nonumber \\
& \approx & \frac{r-t}{\tau^3} \frac{\int_t^r \bigg(-\tilde{F}_s \big(\frac{\partial f_s}{\partial x_s}\big)^2 \frac{\partial f_s}{\partial v_s}
\bigg)^{-1} f_s ds}{(r-t)} + v_t,~r \in (t, t+\tilde{t}] . \nonumber
\end{eqnarray}
It may be noted that the homogenization time scaling is exponential in $\tau$, i.e., exponent is $3$ in this specific example.
The time scale for changes in the peeling velocity are thus squeezed as $1/\tau^3$, scaling the average peeling velocity over the 
slower time scale of the differential variables to a faster one.
\subsection{Relationship with Bifurcation} 
Let $0 < \tau^{(i+1)} < \tau^{(i)} \ll 1,~i=1, \cdots, n$ be successive homogenization time scales such that $J_s$ is invertible at any $s \in [t, t+\delta t]$.
By Lemma \ref{sol}, this makes it possible to obtain numerical solutions $(\bar{\mathsf{x}}^{(i)}(s), \bar{\mathsf{v}}^{(i)}(s)),~i=1, \cdots, n$. 
Obviously $\tau^{(n)} \ge \tau_*$ where $\tau_*$ is the least positive real number for which $J_s$ is invertible at any $s \in [t, t+\delta t]$. 
As $n \to \infty$, we can make $\tau^{(n)} \to \tau_*$.
Corresponding to this, we obtain $\lim_{n \to \infty} (\bar{\mathsf{x}}^{(n)}(s), \bar{\mathsf{v}}^{(n)}(s)) \to  (\bar{\mathsf{x}}(s), \bar{\mathsf{v}}(s))$ where
$ (\bar{\mathsf{x}}(s), \bar{\mathsf{v}}(s))$ is the homogenized solution of (\ref{peel}) using (\ref{av}--\ref{str}) and (\ref{hv}) for $s\in [t, t+\delta t]$. 
Thus the reformulation  (\ref{hv}) along with (\ref{av}--\ref{str}) also captures the local nonlinear bifurcations, if any, at a suitable $\tau_*$. 
The homogenization time scale $\tau$ effectively perturbs the differential variables
in the Jacobian matrix $\frac{\partial f_s}{\partial x_s}$ and the gradient vector $\frac{\partial f_s}{\partial v_s}$ to affect a 
non-zero Schur complement,
which in turn, produces a regularized Jacobian matrix ${\mathcal D}_t$. Then, by Lemma \ref{reglem} and Theorem \ref{bif}, one can conclude that 
the homogenization is not inconsistent with the local nonlinear bifurcation.
\subsection{Numerical Integration}
The very high differential index of the DAE system (\ref{peel}) at stick slip points accompanied by the possible non-differentiability of $v$ due to 
near jump changes in the exponentially faster time scale makes almost all
DAE solvers (such as the Runge-Kutta solvers in MATLAB \footnote{http://www.mathworks.com/products/matlab}, the Gauss-Legendre Runge-Kutta methods 
and the Backward Differentiation Formula algorithm of the 
DDASPK \footnote{http://www.cs.ucsb.edu/$\sim$cse/software.html}) fail to integrate (\ref{peel}) directly as a DAE. The homogenized approach
reduces the stiffness and within an implicit solver the condition number of the Jacobian matrix in the non-linear solution phase
can be improved by a choice of appropriate $\tau$. Of course, larger the $\tau$, the more smoothed the solution of $v$ is. 
That is, $v$ converges only in the weak sense with respect to $\tau$. 
Hence it is important for the simulation to employ a numerical method that can self regularize its Jacobian matrix without losing stability so that 
the model given by (\ref{av}--\ref{str}) along with (\ref{hv}) can be studied with as small a $\tau$ as possible with a view to capturing the
stiff and oscillatory behavior of $v$ in the stick-slip regime. At the smallest $\tau$, the numerical method Jacobian approaches numerical rank \cite{golub}
deficiency, elucidating local bifurcations, if any, in the stick-slip regime.

In this work, we use the {\it $\alpha$-method} (described in detail in  \cite{GAlpha, Multi} and
not connected with the peel front angle $\alpha$) for the time integration of (\ref{av}--\ref{str}) appended with (\ref{hv}), i.e.,
the homogenized reformulation of (\ref{peel}). We define 
$$\psi_s := (f^{\mathsf T}_s, -J_s^{-1} \frac{\partial \big((1 + \frac{l \sin\alpha}{L(\alpha)}) F(u,\alpha) \big)_s}{\partial x_s} f_s)^{\mathsf T}:
{\mathbb R}^4 \to {\mathbb R}^4,~~y := (x^{\mathsf T}, v)^{\mathsf T} \in {\mathbb R}^4.$$ The {\it $\alpha$-method} numerically 
integrates (\ref{av}--\ref{str}, \ref{hv}) over $[t_0, t_f]$ using a uniform mesh of $N$ equal time steps, each  of 
size $(t_f - t_0)/N$. If the data at the $n$th discretization point is known, then the numerical solution at 
$(n+1)$th (indicated by the subscript) is found by the {\it $\alpha$-method} as follows.
\begin{subequations}
\begin{eqnarray}
y_{n+1}&=& y_n+\left(1-\frac{\beta}{\gamma}\right)h \psi(y_n) +\frac{\beta}{\gamma}h \psi(y_{n+1}) +\left(\frac{1}{2}-\frac{\beta}{\gamma}\right)h a_n\\
a_{n+1}&=&\frac{\psi(y_{n+1})-\psi(y_n)}{\gamma} +\left(1-\frac{1}{\gamma}\right) a_n \\
\gamma & := & \frac{2}{\rho+1}-\frac{1}{2};~~\beta:=\frac{1}{(\rho+1)^2},~~\rho \in [0,1)
\end{eqnarray}
\end{subequations}
where $h= t_{n+1}-t_n = (t_f - t_0)/N $ is the uniform time step size, and $a \in {\mathbb R}^4$ is an algorithmic variable that is merely updated
after every time step but does not need to be iteratively solved for in a time step within the nonlinear solver. 
The initial condition $a_0$ is calculated as $\frac{\mathrm{d}\psi}{\mathrm{d}t}$ at $t=0$.
The {\it $\alpha$-method} is an implicit $A$-stable integrator with the property of 
producing well-conditioned Jacobian matrices \cite{GAlpha, Multi} particularly when the algorithmic parameter $\rho$ is closer to zero.
The method is second order accurate in time step size for $y$ when (\ref{peel}) when $\psi$ is at least twice differentiable with respect to time. 
When $v$ is such that $\psi$ is only Lipschitz continuous then the {\it $\alpha$-method} is first order accurate in time step size. 
However, in the high differential index stick slip regime, $v$ may have poor smoothness leading to $\psi$ not being differentiable with respect to time
at some points. At these points the $\alpha$ method have an error linear in step size for $y$. The homogenized reformulation
(\ref{hv}) of (\ref{sac}) averages out the oscillatory and the stiff response in $v$ over a stretched time scale and thus increases the smoothness of 
$v$ making $\psi$ smooth almost everywhere over the time interval of simulation. 
A detailed error analysis of the {\it $\alpha$-method} under various continuity conditions may be found in \cite{GAlpha, Multi}.
\section{Numerical Example}
We consider an example with parameter values chosen following \cite{de2005missing, de2004dynamics, de2006dynamics}.
The peeling dynamics example is numerically simulated by the {\it $\alpha$-method} as the ODE system (\ref{av}--\ref{str}, \ref{hv}).
\begin{figure}[h,t,b]
 \centering
 \includegraphics[scale=0.59]{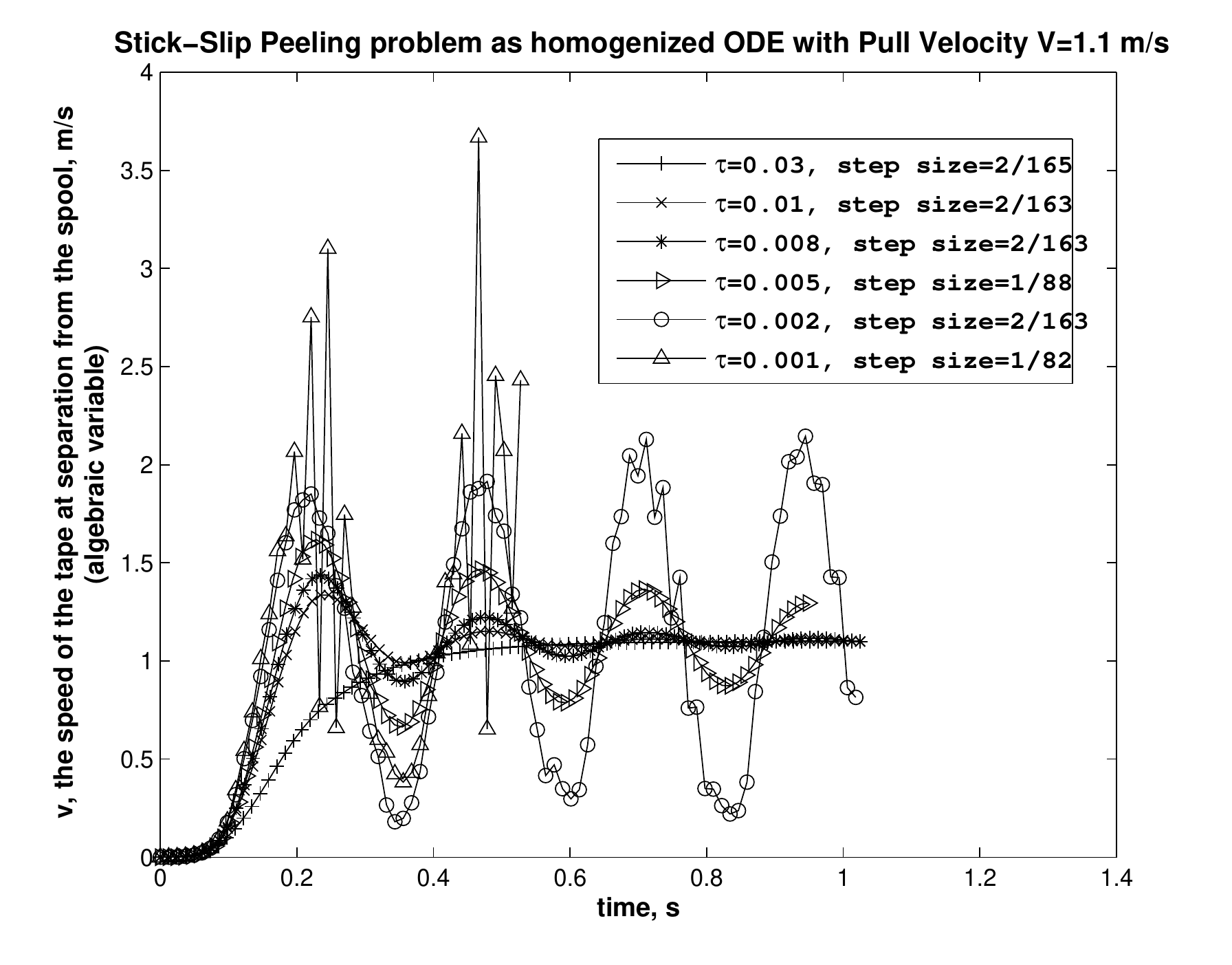}
\caption{Peeling velocity response with $V=1.1$ m/s. $\tau=0.001$ is the least $\tau$ for which 
the $\alpha$-method could be effective and $v$ shows stiff non-smooth changes and is oscillatory. A time step of $1/82$ s was the best accuracy that could be obtained without 
failure in convergence of the Newton iterations inside the  {\it $\alpha$-method}. It may be noted that at this pull velocity the local bifurcations happen often and $\tau$ is larger
than that for the other two pull velocities used in this work. Also, due to relatively 
ill-conditioned Jacobian at stick-slip, simulation stops early for a smaller $\tau$. 
As $\tau$ increases, the time profile of $v$ is smoothed to its average value which is the pull velocity.}\label{vODE1}
\end{figure}
\begin{figure}[h,t,b]
 \centering
 \includegraphics[scale=0.58]{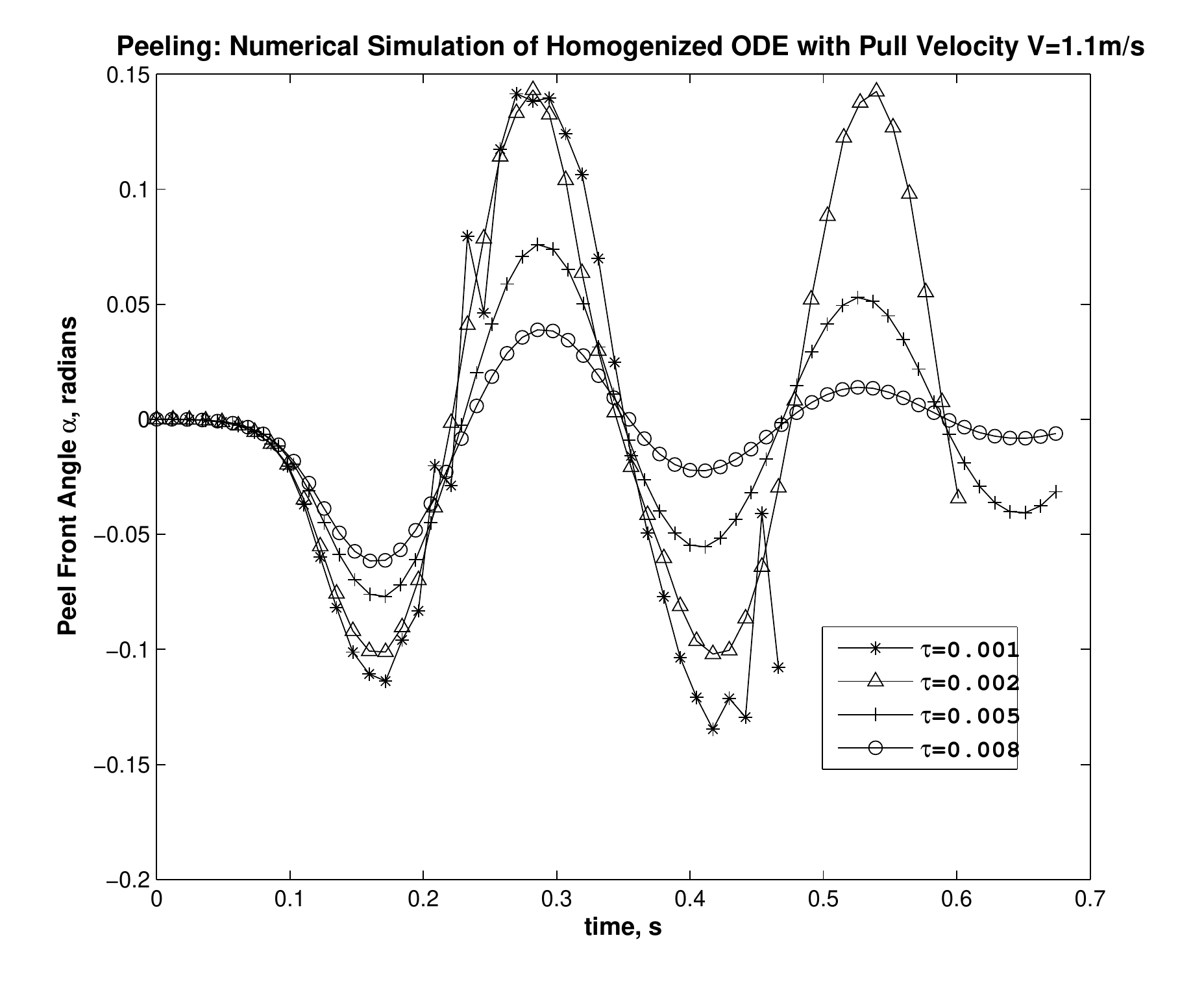}
\caption{Peeling front angle ($\alpha$) response with $V=1.1$ m/s. As $\tau$ increases, the time profile of $\alpha$ 
is smoothed to its average value which is zero.  At $\tau=0.001$ the slower scale compared to $v$ is seen.}\label{avODE1}
\end{figure}
\begin{figure}[h,t,b]
 \centering
 \includegraphics[scale=0.58]{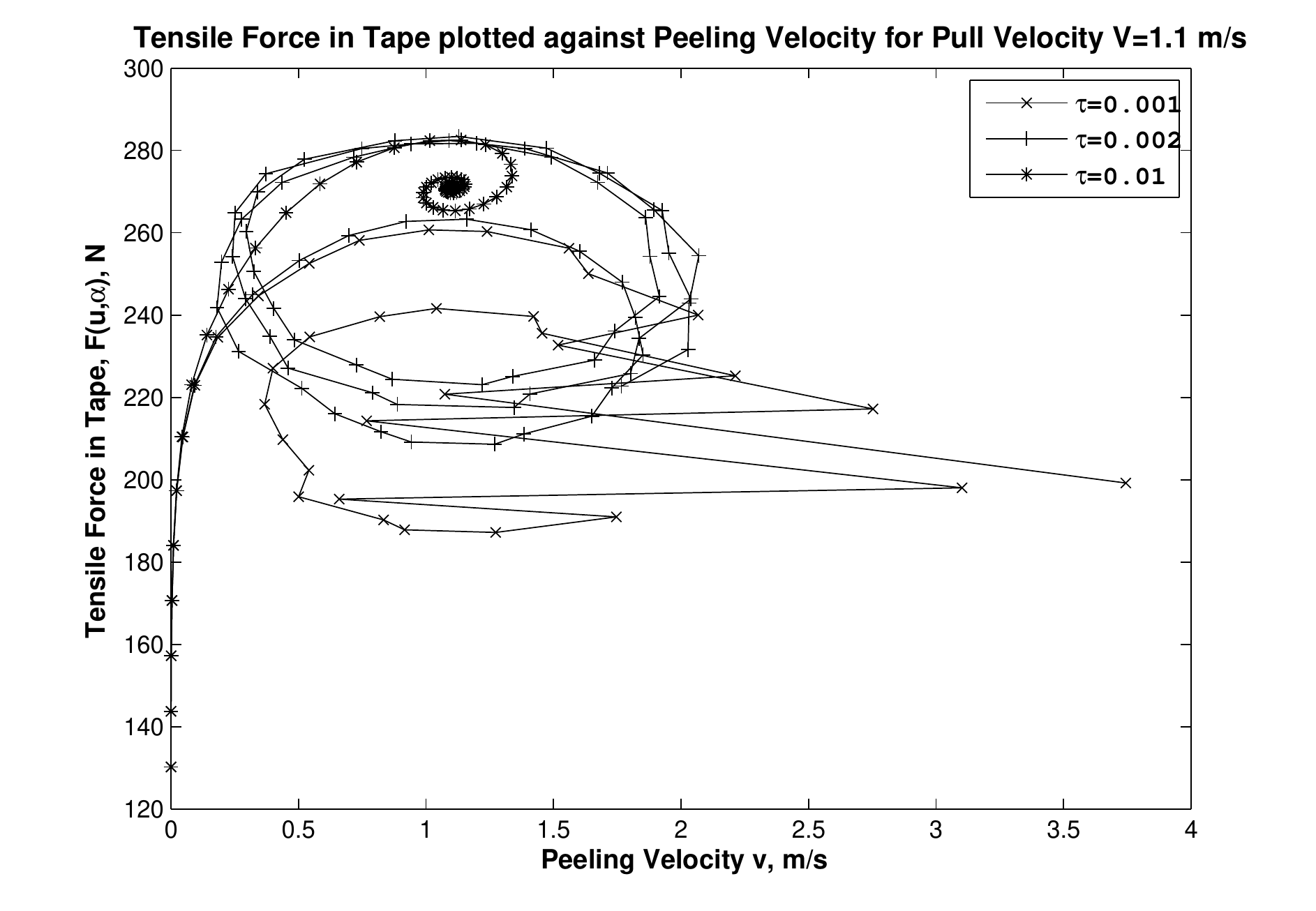}
\caption{Tensile force in Tape $F(u,\alpha)$  plotted against Peeling Velocity $v$ for Pull Velocity $V=1.1$ m/s. At low $\tau$ the bifurcations can be
seen as $v$ changes sharply corresponding to very little change in   $F(u,\alpha)$. This also demonstrates the very fast time scale of changes in $v$ 
and its stiffer response in bifurcations compared to $F(u,\alpha)$ which is a function of the slower differential variables. For a larger $\tau$ as $v$ averages
out to approach $V$, the graph goes to a fixed point at which $v=V$.} \label{FvODE1}
\end{figure}
\begin{figure}[h,t,b]
 \centering
 \includegraphics[scale=0.58]{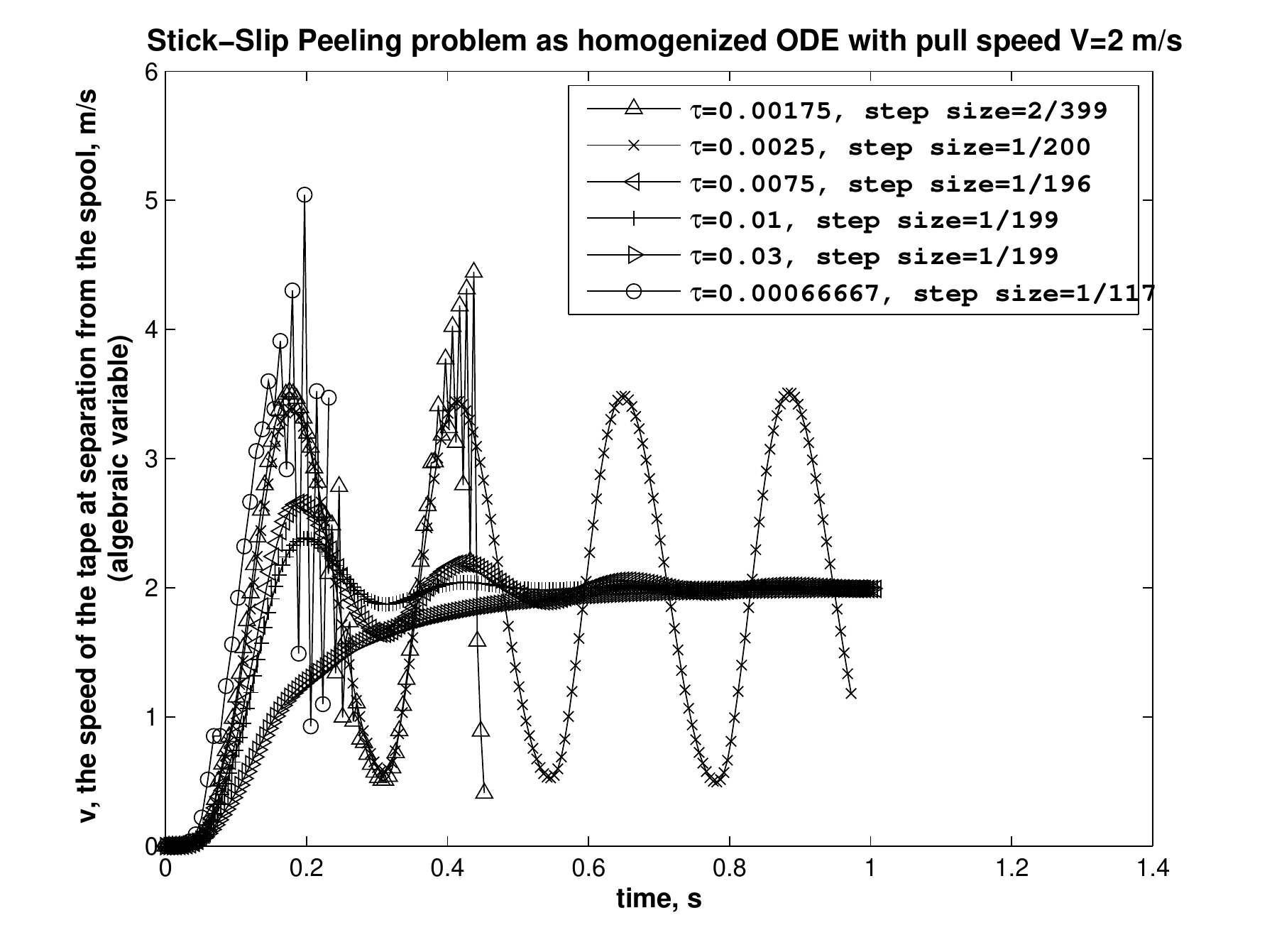}
\caption{Peeling velocity response with $V=2$ m/s. Here smaller time step size could be
taken, increasing the accuracy of the simulation. However, $\tau$ is of the same order as in $V=1.1$ as the stick-slip caused bifurcations remain significant.
The least $\tau$ has a slightly lower value as the differential variables get faster due to a higher pulling velocity $V$. 
As $\tau$ increases, $v$ is averaged out and approaches $V$. For a smaller $\tau$, the Jacobian is more ill-conditioned and the numerical 
integration stops early.}\label{vODE2}
\end{figure}
\begin{figure}[h,t,b]
 \centering
 \includegraphics[scale=0.58]{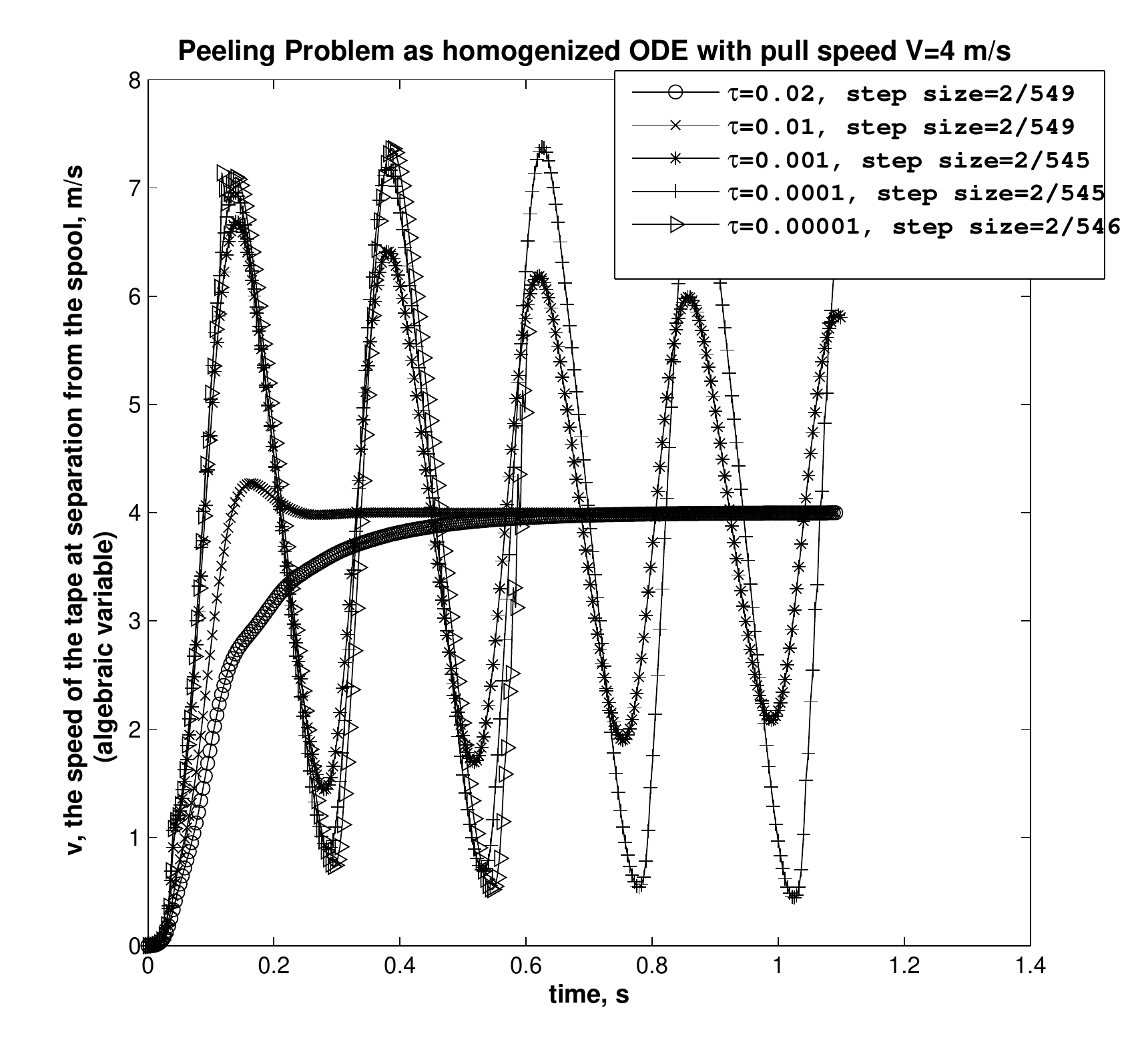}
\caption{Peeling velocity response with $V=4$ m/s. At this pull velocity there are fewer bifurcations in the stick-slip regime.
 The time profile of $v$ is also smoother
and less stiff. Under a higher $V$, the shear force peeling the tape is higher and the differential variables are faster, almost matching up with the time scale of $v$. 
Hence a much smaller time step size and much smaller $\tau$ could be used. 
As $\tau$ increases, $v$ is averaged out and approaches $V$. }\label{vODE3}
\end{figure}
\begin{figure}[h,t,b]
 \centering
 \includegraphics[scale=0.58]{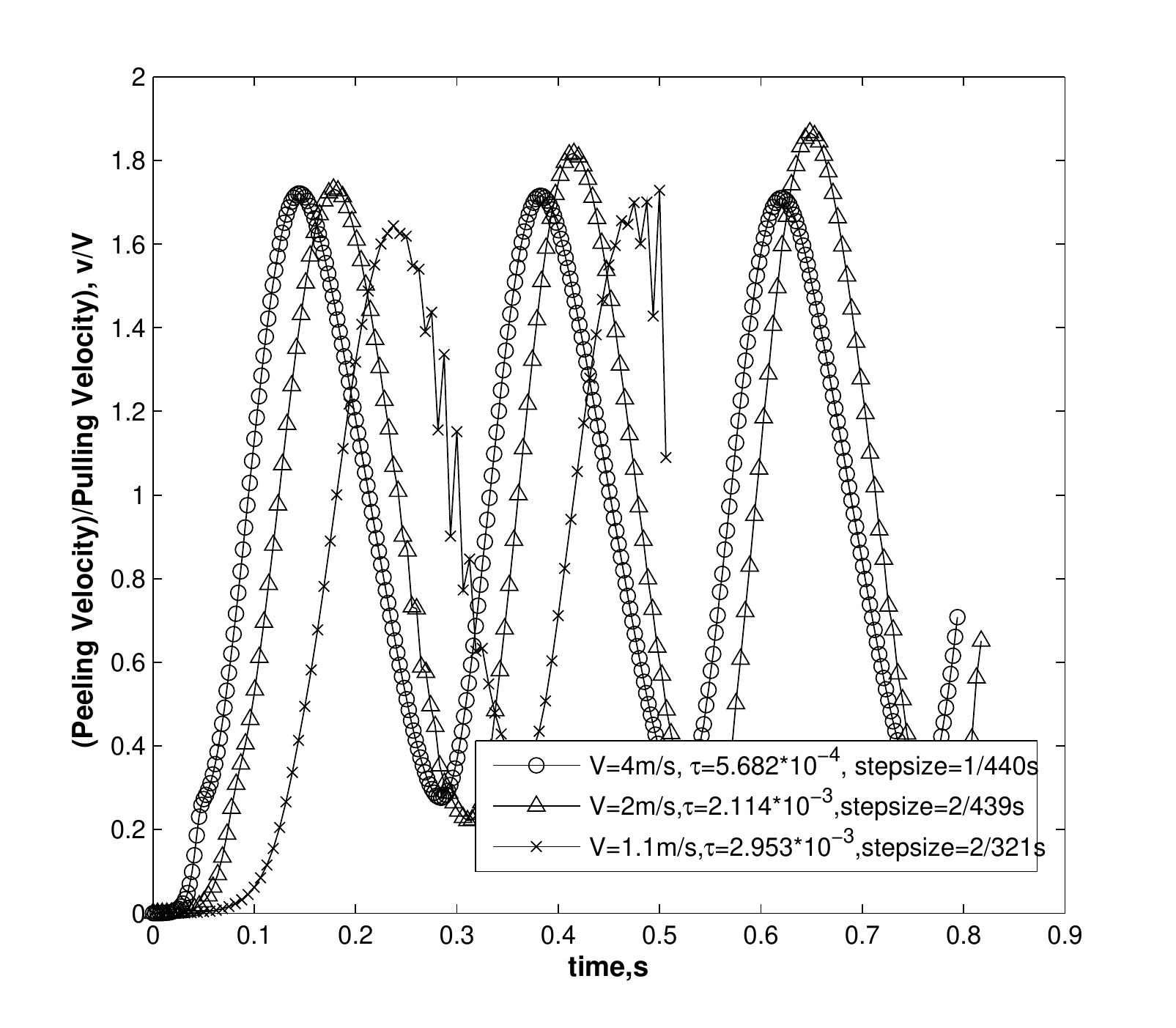}
\caption{$v$ as a solution of the homogenized ODE for various $V$'s. For a higher $V$ the problem could be solved with 
a smaller minimum $\tau$ for which the integrator Jacobian does not become non-invertible. Also a higher $V$ could admit 
a smaller time step size allowing more accuracy in the solution. This is because at a lower $V$ the fast scale of the stick-slip
regime dominates. For $V=1.1$ m/s the simulation stops early as a time step encounters relatively ill-conditioned Jacobian during
a sticking phase. Also, for this pull velocity the high frequency smaller amplitude oscillations happen during the sticking phase due
to the high stiffness and high differential index of the model.}
\label{vComp}
\end{figure}

The parameters are in SI units: $R=0.1,~I=10^{-2},~k/l=1000$ and for computational 
purposes, the approximations $L(\alpha) \approx l$, $l \gg u$ and $l \gg R$ hold.  The initial data for all the simulations are: $v_0 =10^{-10} \times V,
~\omega_0 = v_0/R,~ \alpha_0 =\frac{\pi}{4} \times 10^{-8}$ and the adhesion model \cite{de2006dynamics} is taken as $\phi(v,V):= 400 v^{0.35}+110 v^{0.15}+130 \exp{(v/11)}-2 V^{1.5}-
(415-45 V^{0.4}-0.35 V^{2.15})v^{0.5}$. The user-selectable parameter $\rho$ in the {\it $\alpha$-method} is set to zero for
keeping the integrator Jacobian matrix (cf.  \cite{GAlpha} for theoretical details) as well-conditioned as possible. 
The values of the constant pulling velocity, $V$, are chosen following the difficulty encountered in simulation of the 
peeling dynamics in \cite{de2005missing, de2004dynamics, de2006dynamics}. 
Since the peeling velocity $v$ is characterized by the local bifurcations and is the algebraic variable in the stick-slip dynamics, we study in Figures 
\ref{vODE1}, \ref{vODE2} and \ref{vODE3} the numerically computed time profiles of $v$ for various values of the pull velocity, $V$. 
Figure \ref{vODE1} corresponds to $V=1.1$ m/s, Figure \ref{vODE2} to $V=2$ m/s and Figure \ref{vODE3} to $V=4$ m/s. As the pulling velocity $V$ increases, 
the shear force overcoming the adhesion increases and the local bifurcations occur less frequently. 
For $V=4$ m/s, bifurcations are less accentuated and the time profile of $v$ is smoother
and less stiff as the spool rotates relatively unhindered by the adhesion. 
For $V=1.1$ and $2$ m/s, the stick slip regime dominates.
A smaller $\tau$ for $V=1.1, 2$ m/s would mean capturing the very stiff near-jump changes in $v$ during the stick-slip regime. 
The  actual computation is limited by the smallest real number the machine can represent and the numerical method can render 
the peeling velocity in the stick-slip regime only to the extent the integrator Jacobian remains numerically full rank (cf. \cite{golub} for numerical
rank of a matrix). 
However, for $V=4$ m/s the smoother and the less stiff response of $v$ allows one to use a smaller $\tau$. 
Thus the least $\tau$ for which the numerical integration can proceed without encountering a numerically rank deficient integrator Jacobian
is more at $V=1.1,2$ m/s than that at $V=4$ m/s. This illustrates that the parameter $\tau$ is essentially the time scale to which 
we are able to resolve and observe the stick-slip dynamics numerically. Lesser the bifurcations, computationally
it is easier to attain a finer resolution. Figure \ref{FvODE1} shows the bifurcations with respect to $F(u, \alpha)$ for the pull velocity $V=1.1$ m/s which
has the most dominant stick-slip regime. The faster changes in $v$ compared to $F(u, \alpha)$ may be noted in the same figure.
For all three values of $V$ considered, the increase in $\tau$ leads to coarser time resolution of $v$ and the 
largest $\tau$ resolves it only to a smoothed time profile which goes to the average value $v=V$. 
Figure \ref{avODE1} shows the time profile of $\alpha$ for $V=1.1$ m/s. As $\tau$ increases, $\alpha$ reaches its average value
which is zero and for the least $\tau$ it shows smaller amplitude faster oscillations along the slower and smoother trajectory. At $\tau=0.001$
it can be seen that $v$ is stiffer and more oscillatory than $\alpha$. Figure \ref{vComp} compares
the relative time profiles of $v$ at various values of $V$ and shows the difficulty in the numerical simulation 
when trying to capture the stick-slip regime behavior of the peeling velocity for the lower pull velocities which produce more frequent and pronounce stick-slips.
\section{Conclusion}
We have shown that the bifurcations in the peeling dynamics of an adhesive tape are a consequence of the rank deficiency of the 
Jacobian and the high local differential index of the model which are structural properties of the DAE model of the peeling dynamics.
The bifurcations characterize the peeling velocity and also makes changes in the peeling velocity exponentially faster than 
the peel front angle and/or the tensile displacement of the tape. The homogenized ODE model presented in this work captures 
the characteristic time scale of the stick-slip dynamics 
by introducing the parameter $\tau$. This is important since a DAE cannot be studied as an ODE because of 
its inherent singularity and any ODE approximation, such as the homogenized ODE presented in 
this work, must have consistent convergence properties that smooth out the singularity
inherent in the DAE. At the smallest $\tau$ the homogenized ODE approximation approaches the 
DAE behavior since its Jacobian approaches rank deficiency. 
The numerical simulations corroborate the smoothing property of the homogenized ODE approach and 
at smaller values of $\tau$ elucidate the local bifurcations.
\section*{Acknowledgment}
The authors are grateful to Prof. G. Ananthakrishna of the Materials Research Centre of the Indian Institute of Science, Bangalore, India for sharing his insights
and for introducing them to the problem.


\begin{thebibliography}{99}
\providecommand{\url}[1]{#1}
\csname url@samestyle\endcsname
\providecommand{\newblock}{\relax}
\providecommand{\bibinfo}[2]{#2}
\providecommand{\BIBentrySTDinterwordspacing}{\spaceskip=0pt\relax}
\providecommand{\BIBentryALTinterwordstretchfactor}{4}
\providecommand{\BIBentryALTinterwordspacing}{\spaceskip=\fontdimen2\font plus
\BIBentryALTinterwordstretchfactor\fontdimen3\font minus
  \fontdimen4\font\relax}
\providecommand{\BIBforeignlanguage}[2]{{%
\expandafter\ifx\csname l@#1\endcsname\relax
\typeout{** WARNING: IEEEtran.bst: No hyphenation pattern has been}%
\typeout{** loaded for the language `#1'. Using the pattern for}%
\typeout{** the default language instead.}%
\else
\language=\csname l@#1\endcsname
\fi
#2}}
\providecommand{\BIBdecl}{\relax}
\BIBdecl

\bibitem{barquins1997kinetics}
M.~Barquins and M.~Ciccotti, ``On the kinetics of peeling of an adhesive tape
  under a constant imposed load,'' \emph{International journal of adhesion and
  adhesives}, vol.~17, no.~1, pp. 65--68, 1997.

\bibitem{ciccotti13}
P.-P. Cortet, M.-J. Dalbe, C.~Guerra, C.~Cohen, M.~Ciccotti, S.~Santucci, and
  L.~Vanel, ``Intermittent stick-slip dynamics during the peeling of an
  adhesive tape from a roller,'' \emph{Physical Review E}, vol.~87, no.~2, p.
  022601, 2013.

\bibitem{cortet2007imaging}
P.-P. Cortet, M.~Ciccotti, and L.~Vanel, ``Imaging the stick--slip peeling of
  an adhesive tape under a constant load,'' \emph{Journal of Statistical
  Mechanics: Theory and Experiment}, vol. 2007, no.~03, p. P03005, 2007.

\bibitem{ciccotti1998stick}
M.~Ciccotti, B.~Giorgini, and M.~Barquins, ``Stick-slip in the peeling of an
  adhesive tape: evolution of theoretical model,'' \emph{International journal
  of adhesion and adhesives}, vol.~18, no.~1, pp. 35--40, 1998.

\bibitem{ciccotti2004complex}
M.~Ciccotti, B.~Giorgini, D.~Vallet, and M.~Barquins, ``Complex dynamics in the
  peeling of an adhesive tape,'' \emph{International journal of adhesion and
  adhesives}, vol.~24, no.~2, pp. 143--151, 2004.

\bibitem{de2006dynamics}
R.~De and G.~Ananthakrishna, ``Dynamics of the peel front and the nature of
  acoustic emission during peeling of an adhesive tape,'' \emph{Physical Review
  Letters}, vol.~97, no.~16, p. 165503, 2006.

\bibitem{kumar2008hidden}
J.~Kumar, M.~Ciccotti, and G.~Ananthakrishna, ``Hidden order in crackling noise
  during peeling of an adhesive tape,'' \emph{Physical Review E}, vol.~77,
  no.~4, p. 045202, 2008.

\bibitem{gandur1997complex}
M.~Gandur, M.~Kleinke, and F.~Galembeck, ``Complex dynamic behavior in adhesive
  tape peeling,'' \emph{Journal of adhesion science and technology}, vol.~11,
  no.~1, pp. 11--28, 1997.

\bibitem{de2008lifting}
R.~De and G.~Ananthakrishna, ``Lifting the singular nature of a model for
  peeling of an adhesive tape,'' \emph{The European Physical Journal B},
  vol.~61, no.~4, pp. 475--483, 2008.

\bibitem{petzold1982differential}
L.~Petzold, ``{Differential-algebraic equations are not ODE's},'' \emph{SIAM
  Journal on Scientific and Statistical Computing}, vol.~3, no.~3, pp.
  367--384, 1982.

\bibitem{de2005missing}
R.~De and G.~Ananthakrishna, ``Missing physics in stick-slip dynamics of a
  model for peeling of an adhesive tape,'' \emph{Physical Review E}, vol.~71,
  no.~5, p. 055201, 2005.

\bibitem{de2004dynamics}
R.~De, A.~Maybhate, and G.~Ananthakrishna, ``Dynamics of stick-slip in peeling
  of an adhesive tape,'' \emph{Physical Review E}, vol.~70, no.~4, p. 046223,
  2004.

\bibitem{petzoldbook}
K.~E. Brenan, S.~L. Campbell, and L.~R. Petzold, \emph{Numerical solution of
  initial-value problems in differential-algebraic equations}.\hskip 1em plus
  0.5em minus 0.4em\relax Society for Industrial and Applied Mathematics, 1996.

\bibitem{teschl}
G.~Teschl, \emph{Nonlinear functional analysis}.\hskip 1em plus 0.5em minus
  0.4em\relax Lecture notes in Mathematics, University of Vienna, Austria,
  2001.

\bibitem{deimling}
K.~Deimling, \emph{Nonlinear functional analysis}.\hskip 1em plus 0.5em minus
  0.4em\relax Courier Dover Publications, 2013.

\bibitem{camara}
C.~G. Camara, J.~V. Escobar, J.~R. Hird, and S.~J. Putterman, ``{Correlation
  between nanosecond X-Ray flashes and stick--slip friction in peeling tape},''
  \emph{Nature}, vol. 455, no. 7216, pp. 1089--1092, 2008.

\bibitem{hairer2}
E.~Hairer and G.~Wanner, \emph{Solving Ordinary Differential Equations II.
  Stiff and Differential-Algebraic Problems.}, {Second Revised}~ed.\hskip 1em
  plus 0.5em minus 0.4em\relax Springer Series in Computational Mathematics,
  1996, vol.~14.

\bibitem{golub}
G.~H. Golub and C.~F. Van~Loan, \emph{Matrix Computations}, 4th~ed.\hskip 1em
  plus 0.5em minus 0.4em\relax Johns Hopkins University Press, 2012.

\bibitem{GAlpha}
N.~C. Parida and S.~Raha, ``The $\alpha$ method direct transcription in path
  constrained dynamic optimization,'' \emph{SIAM J. Sci. Comput.}, vol.~31,
  no.~3, pp. 2386--2417, 2009.

\bibitem{Multi}
------, ``Regularized numerical integration of multibody dynamics with the
  generalized $\alpha$ method,'' \emph{Applied Mathematics and Computation},
  vol. 215, no.~3, pp. 1224--1243, 2009.

\end{thebibliography}
\end{document}